\newtheorem{theorem}{Theorem}[section]
\newtheorem{lemma}[theorem]{Lemma}
\newtheorem{corollary}[theorem]{Corollary}
\newtheorem{observation}[theorem]{Observation}
\theoremstyle{definition}
\newtheorem{definition}[theorem]{Definition}
\newcommand{\mms}[1]{\mathrm{MMS}^{1\text{-out-of-}#1}}
\newcommand{\minshare}{p_{\mathrm{MMS}}}
\newcommand{\E}{\mathbb{E}}
\newcommand{\tA}{\tilde{A}}
\newcommand{\tX}{\tilde{X}}
\newcommand{\cA}{\mathcal{A}}
\newcommand{\cP}{\mathcal{P}}
\newcommand{\cS}{\mathcal{S}}
\newcommand{\cX}{\mathcal{X}}
\newcommand{\ind}{\mathbf{1}}
\title{Ordinal Maximin Guarantees for Group Fair Division}
\author{
Pasin Manurangsi$^1$\And
Warut Suksompong$^2$\\
\affiliations
$^1$Google Research, Thailand\\
$^2$School of Computing, National University of Singapore, Singapore\\
}
\begin{document}

\maketitle

\begin{abstract}
We investigate fairness in the allocation of indivisible items among groups of agents using the notion of maximin share (MMS).
While previous work has shown that no nontrivial multiplicative MMS approximation can be guaranteed in this setting for general group sizes, we demonstrate that ordinal relaxations are much more useful.
For example, we show that if $n$ agents are distributed equally across $g$ groups, there exists a $1$-out-of-$k$ MMS allocation for $k = O(g\log(n/g))$, while if all but a constant number of agents are in the same group, we obtain $k = O(\log n/\log \log n)$. 
We also establish the tightness of these bounds and provide non-asymptotic results for the case of two groups.
\end{abstract}

\section{Introduction}

A fundamental problem in society is how to allocate scarce resources among interested parties.
The burgeoning field of \emph{fair division} studies how to perform the allocation fairly in scenarios ranging from dividing properties between families to distributing supplies among departments of a university or neighborhoods of a city \citep{BramsTa96,Moulin19}.
Often, the resources to be allocated are \emph{indivisible}, meaning that each item is discrete and must be allocated as a whole to one of the parties.

To argue about fairness, one must specify what it means for an allocation to be ``fair''.
A popular fairness notion when allocating indivisible items is \emph{maximin share fairness} \citep{Budish11}.
If the items are allocated among $n$ agents, the \emph{maximin share (MMS)} of an agent is defined as the largest value that the agent can guarantee for herself by dividing the items into $n$ bundles and getting the worst bundle.
While an allocation that gives every agent at least her maximin share always exists when $n = 2$, this is no longer the case if $n \ge 3$ \citep{KurokawaPrWa18}.
A large stream of work has therefore focused on guaranteeing every agent a constant fraction of her maximin share, with the current best constant being marginally above $3/4$ \citep{AkramiGa24}.

The discussion in the preceding paragraph, like the majority of the fair division literature, assumes that each recipient of a bundle of items corresponds to a single agent.
However, in many allocation scenarios, each interested party is in fact a \emph{group} of agents.
Even though the agents in each group share the same set of items, they do not necessarily share the same views on the items.
Indeed, members of the same family, university department, or city neighborhood may have varying desires for different items based on their individual preferences and needs.
As a consequence, several researchers have investigated fair division among groups of agents in the last few years \citep{ManurangsiSu17,ManurangsiSu22,GhodsiLaMo18,SegalhaleviNi19,SegalhaleviSu19,KyropoulouSuVo20}.\footnote{The line of work on \emph{consensus halving} and \emph{consensus $1/k$-division} problems (e.g., \citep{SimmonsSu03,FilosratsikasGo18,GoldbergHoIg22}) can also be viewed as fair division for groups.}

The study of maximin share fairness for groups was initiated by \citet{Suksompong18}, who showed, e.g., that for two groups with two agents each, it is possible to guarantee each agent $1/8$ of her MMS.
Unfortunately, Suksompong observed that if each group contains three agents, it may already be impossible to give all of them any positive fraction of their MMS.\footnote{Suppose there are three items and each agent has utility $1$ for two of them and $0$ for the remaining item, where the three agents in each group value distinct pairs of items.
Then, each agent's MMS is~$1$, but every allocation leaves some agent with utility~$0$.}
In light of this, it may appear that maximin share fairness is not useful for group resource allocation unless the groups are extremely small.
Nevertheless, while \emph{cardinal} approximations of MMS have been intensively examined in the literature, another type of relaxation, which has also received increasing attention, is \emph{ordinal}.
This type of relaxation allows each agent to partition the items into $k$ bundles for some parameter $k > n$---the corresponding notion is called \emph{$1$-out-of-$k$ MMS}.
For individual fair division, \citet{AkramiGaTa23} proved that ordinal MMS fairness can always be satisfied if we set $k = \lceil 4n/3\rceil$, improving upon prior results by \citet{AignerhorevSe22} and \citet{HosseiniSeSe22}.
Can ordinal approximations of MMS provide general fairness guarantees for groups as well?

\subsection{Our Results}

We consider a setting where $n = n_1 + \dots + n_g$ agents are partitioned into $g\ge 2$ groups of sizes $n_1,\dots,n_g \ge 1$, respectively.
Each agent has an additive and non-negative utility function over the set of items to be allocated.
We denote by $\minshare(n_1, \dots, n_g)$ the smallest integer $p$ such that for $g$~groups of sizes $n_1,\dots,n_g$, there always exists an allocation that gives every agent at least her $1$-out-of-$p$ MMS.

We shall prove asymptotically tight bounds on $\minshare$.
Perhaps surprisingly, we show that the answer differs between the ``unbalanced'' case where one group is much larger than all other groups and the ``balanced'' case where this condition does not hold.
The first case may be applicable, for instance, when a city has a large central neighborhood in which the majority of its residents live.
Formally, we establish the following theorems.
Note that the constant $1000$ is unimportant, as the two bounds in each theorem are of the same order when $\log(n_1 + 1) = \Theta(\log[(n_2+1)\cdots(n_g+1)])$.

\begin{theorem}[Upper bounds] \label{thm:ub-main}
Let $n_1 \geq \cdots \geq n_g$ be any positive integers.
If $\frac{\log(n_1 + 1)}{\log[(n_2+1)\cdots(n_g+1)]} \le 1000$, then 
\begin{align*}
\minshare(n_1, \dots, n_g) \leq O(\log[(n_1+1)\cdots(n_g+1)]).
\end{align*}
On the other hand, if $\frac{\log(n_1 + 1)}{\log[(n_2+1)\cdots(n_g+1)]} > 1000$, then
\begin{align*}
\minshare(n_1, \dots, n_g) \leq O\left(\frac{\log(n_1 + 1)}{ \log\left(\frac{\log(n_1 + 1)}{\log[(n_2 + 1)\cdots(n_g + 1)]}\right)}\right).
\end{align*}
\end{theorem}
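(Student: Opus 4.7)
The plan is to prove both bounds via a single randomized-allocation scheme whose analysis is tuned differently in the two regimes. Before randomizing, I would apply a standard ``large item'' reduction: rescale each agent's utility so her $1$-out-of-$k$ MMS equals $1$, and preprocess items worth more than $1$ to any agent (each such item is a singleton in some MMS partition and can be handled by a matching-style argument). After this, no remaining item is worth more than $1$ to any agent, and $v_i(M) \geq k$ for every agent $i$.

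For the balanced regime, I would send each item independently to group $j$ with probability $p_j$, where $\sum_j p_j = 1$. For any agent $i$ in group $j$, the total received value $X_i$ is a sum of independent $[0,1]$-bounded random variables with $\mathbb{E}[X_i] = p_j v_i(M) \geq p_j k$. A Chernoff bound then gives $\Pr[X_i < 1] \leq \exp(-\Omega(p_j k))$, and a union bound over all $n_1 + \cdots + n_g$ agents is satisfied whenever $p_j k \gtrsim \log(n_j + 1)$ for every $j$. Choosing $p_j$ proportional to $\log(n_j+1)$ makes all constraints bind simultaneously and yields $k = O\!\left(\log[(n_1+1)\cdots(n_g+1)]\right)$, matching the first bound; the hypothesis $\log(n_1+1)/\log[(n_2+1)\cdots(n_g+1)] \leq 1000$ is precisely what is needed for these $p_j$ to remain in $[0,1]$.

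For the unbalanced regime, the naive union bound over group~$1$ wastes a factor of $\log \rho$, where $\rho := \log(n_1+1)/\log[(n_2+1)\cdots(n_g+1)]$. To recover this factor, the plan is to partition the items into $\Theta(\rho)$ ``self-sufficient'' blocks --- blocks designed so that each block holds a $\Theta(1/\rho)$ fraction of every group-$1$ agent's total value --- and run the balanced random-allocation scheme independently within each block, with $p_1$ taken close to~$1$. A group-$1$ agent then receives an \emph{independent} contribution from each block, and applying Chernoff across the $\Theta(\rho)$ blocks shrinks her failure probability to $\exp(-\Omega(k\log\rho))$, so that a union bound over the $n_1$ agents still succeeds at $k = O(\log(n_1+1)/\log \rho)$, as claimed. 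The small groups, which only need value $1$ in total, are comfortably satisfied by the balanced-case argument applied to even a single block, because $\log[(n_2+1)\cdots(n_g+1)]$ is tiny in this regime.

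The main obstacle, I expect, will be simultaneously engineering the self-sufficient block partition so that the valuations of all $n_1$ agents in the large group are spread sufficiently evenly across the $\Theta(\rho)$ blocks, while still leaving enough structure for the small groups' constraints to fit under the same budget. This likely requires a consensus-splitting-style existence result combined with a careful derandomization or enumeration over block compositions, and threading the constants through --- in particular reconciling them with the threshold $1000$ at which the two regimes of the theorem meet --- is where most of the technical care is needed.
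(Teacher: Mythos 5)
Your balanced-case plan has a real gap: the plain union bound does not give $k = O(\log[(n_1+1)\cdots(n_g+1)])$. With $p_j \propto \log(n_j+1)$ and $k = C\sum_j \log(n_j+1)$, an agent in group $j$ fails with probability $\exp(-\Omega(C\log(n_j+1))) \approx n_j^{-\Omega(C)}$, and summing this over the $n_j$ agents of group $j$ and then over $j$ leaves a total failure probability of order $g$, not $o(1)$. Concretely, if $n_1=\cdots=n_g=1$ the union bound forces $k = \Omega(g\log g)$, whereas the theorem claims $O(g)$. The paper's proof does not push the union bound through; instead it reserves half of the items, uses Markov's inequality to argue that with constant probability at most $O(g)$ agents are unsatisfied by the random phase, and then satisfies those few agents \emph{individually} out of the reserved items using a deterministic $1$-out-of-$(2n-2)$ MMS algorithm for the individual setting (each reserved pool is worth $\ge 8g$ to every agent, so it can be split into $4g \ge 2|N'|$ good parts). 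This two-stage "random assignment plus repair" step is an essential idea your proposal is missing.

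The unbalanced case has a more serious problem: the block decomposition cannot produce the claimed $\exp(-\Omega(k\log\rho))$ failure probability. Items are already assigned independently, so partitioning them into independently randomized blocks changes the distribution of a group-$1$ agent's total value not at all; and the bound "$\Pr[\text{all }\Theta(\rho)\text{ blocks give} < 1] = \prod_b \Pr[X_b < 1] \le \prod_b e^{-\Omega(p_1 k/\rho)} = e^{-\Omega(p_1 k)}$" recovers only the same $\exp(-\Omega(k))$ you get from a direct Chernoff bound on the sum --- the extra $\log\rho$ never appears. (You would also need the self-sufficient partition to exist simultaneously for all $n_1$ agents with indivisible items, which you correctly flag as unresolved.) The actual source of the $\log\rho$ saving in the paper is different: give each item to group $1$ with probability $1-q$ where $q \approx \tfrac{\log\rho}{\rho}\cdot\tfrac{1}{\rho}$ is tiny, and observe that a group-$1$ agent (whose total value is between $\tfrac34 k$ and $4k$ after normalization) fails only if the \emph{complement} set captures value $\ge \tfrac34 k - 1$, an upward deviation by a factor $\approx 1/(4q)$ above its mean $\le 4qk$. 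The Poisson-type upper tail $\left(e\mu/t\right)^{t}$ then gives failure probability $(O(q))^{\Omega(k)} \le \rho^{-\Omega(k)}$, which beats the union bound over $n_1$ agents at $k = O(\log(n_1+1)/\log\rho)$. The leftover $q$-fraction of items is then allocated to groups $2,\dots,g$ by invoking the balanced-case theorem. Without the upper-tail argument on the complement, your proposal does not reach the second bound.
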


\begin{theorem}[Lower bounds] \label{thm:lb-main}
Let $n_1 \geq \cdots \geq n_g$ be any positive integers and $n = n_1+\dots+n_g$.
If $\frac{\log(n_1 + 1)}{\log[(n_2+1)\cdots(n_g+1)]} \le 1000$, then 
\begin{align*}
\minshare(n_1, \dots, n_g) \geq \Omega(\log[(n_1+1)\cdots(n_g+1)]).
\end{align*}
On the other hand, if $\frac{\log(n_1 + 1)}{\log[(n_2+1)\cdots(n_g+1)]} > 1000$, suppose further that $g \leq (\log n)^{1 - \delta}$ for some constant $\delta \in (0,1)$. Then, for any sufficiently large $n$ (depending on $\delta$), we have
\begin{align*}
\minshare(n_1, \dots, n_g) \geq \Omega\left(\frac{\log(n_1 + 1)}{ \log\left(\frac{\log(n_1 + 1)}{\log[(n_2 + 1)\cdots(n_g + 1)]}\right)}\right).
\end{align*}
\end{theorem}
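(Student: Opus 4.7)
The plan is to exhibit, in each of the two regimes, an adversarial additive instance on which no $1$-out-of-$(p-1)$ MMS allocation can exist for $p$ one step below the claimed lower bound. I would work throughout with $0/1$ valuations so that each agent's $1$-out-of-$p$ MMS is essentially $\lfloor |\text{favorite set}|/p\rfloor$, and the requirement that the allocation satisfies her collapses to a clean size-of-intersection inequality between her favorite set and the bundle received by her group. This reduces both regimes to purely combinatorial problems about covering a bundle structure by intersection constraints.

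For the balanced regime I would look for a multi-dimensional instance in which the contribution of each group enters independently, so that the logarithms add across groups to give $\sum_j\log(n_j+1)=\log\prod_j(n_j+1)$. A natural starting point is to label items by tuples in $\prod_j[n_j+1]$ (with a suitable repetition factor) and to assign to the $i$-th agent in group $j$ the favorite set $\{(a_1,\dots,a_g):a_j=i\}$. The key step is then to show, via a double-counting / pigeonhole argument, that any bundling $(S_1,\dots,S_g)$ meeting every agent's intersection requirement forces $\sum_j|S_j|$ to exceed the total number of items unless $p=\Omega(\sum_j\log(n_j+1))$; I expect the additive logarithm to emerge from a concavity / entropy-style step that converts the $g$ multiplicative constraints into a single additive lower bound.

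For the unbalanced regime the hardness concentrates in group~$1$. I would take the favorite sets of the $n_1$ agents in group~$1$ to be uniformly random subsets of a carefully tuned size, and handle the smaller groups with a simple deterministic construction whose role is only to ``use up'' items so that group~$1$ is squeezed. A union bound over the at most $g^m$ possible allocations, together with a Chernoff-style tail estimate on each random favorite set, should show that with positive probability some agent in group~$1$ falls below her $1$-out-of-$p$ MMS threshold. The hypothesis $g\le(\log n)^{1-\delta}$ is exactly what is needed to make $m\log g$ negligible compared to the favorable exponent coming from the Chernoff bound, so that the union bound still succeeds.

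The principal obstacle is the parameter-tuning in the unbalanced case: the target expression $\log(n_1+1)/\log(\log(n_1+1)/\log[(n_2+1)\cdots(n_g+1)])$ contains the $\log\log$ factor that only emerges after solving an implicit equation roughly of the form $\bigl(\log(n_1+1)/\log[(n_2+1)\cdots(n_g+1)]\bigr)^p\asymp n_1+1$. Matching the Chernoff exponent to this implicit equation is delicate and will require calibrating the sizes of the random favorite sets, the number of items, and the bundle-size threshold simultaneously; a secondary subtlety is to verify that the MMS values of the random agents are concentrated tightly enough that no allocation can accidentally satisfy them.
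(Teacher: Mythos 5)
There is a genuine gap, and it is the central combinatorial idea. All of the paper's lower bounds flow from one covering-design construction (\Cref{thm:generic-lb}): each agent $(i,j)$ in group $i$ is assigned a set $S_{i,j}$ of $m-(p-1)$ items worth $\frac{1}{m-(p-1)}$ each, with the remaining $p-1$ items worth $1$, so that her $1$-out-of-$p$ MMS equals $1$, while the family $\{S_{i,j}\}_{j}$ forms an $(m,m-(p-1),t_i)$-covering design. The covering property guarantees that \emph{every} bundle of at most $t_i$ items lies entirely inside some agent's low-value set, so group $i$ is forced to take at least $t_i+1$ items no matter which items it takes, and summing over groups exceeds $m$. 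The per-group contribution $\log(n_i+1)$ arises because the relevant covering numbers grow like $2^{\Theta(t_i)}$, so $n_i$ agents can support $t_i=\Theta(\log(n_i+1))$. Your proposal contains no mechanism that forces a group to take many items \emph{regardless of which items it chooses}, and this is exactly what the $0/1$ "favorite set" framing cannot deliver: with $0/1$ utilities an agent only constrains the size of the intersection with her own favorite set, and a group can satisfy all such constraints simultaneously with a bundle whose size is independent of $n_i$.

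Concretely, your balanced-case construction fails. Take $g=2$, $n_1=n_2=n'$, items indexed by $[n'+1]^2$, and let the $i$-th agent of group $j$ have favorite set the slab $\{a : a_j = i\}$ of size $n'+1$, so her $\mms{2}$ equals $\lfloor (n'+1)/2\rfloor$. The allocation giving group $1$ all items with $a_2 \le \lceil (n'+1)/2\rceil$ and group $2$ the rest meets every row in $\lceil(n'+1)/2\rceil$ items and every column in $\lfloor(n'+1)/2\rfloor$ items, hence is an $\mms{2}$ allocation for every $n'$, whereas the theorem demands $\minshare(n',n') \ge \Omega(\log n')$. No pigeonhole or entropy post-processing can rescue an instance that is simply not hard. (A $0/1$ variant that would work gives each agent a favorite set of exactly $p$ items whose complements form a covering design---but that is again the covering-design idea you are missing.) Your unbalanced sketch is closer in spirit, since random large subsets plus a union bound amounts to a probabilistic proof that suitable covering designs exist; however, you leave the decisive calibration (the implicit equation producing the $\log\log$ denominator) unresolved, and the role you assign to $g\le(\log n)^{1-\delta}$ differs from its actual use, which is to ensure the smaller groups satisfy $n_i \ge 4(\log(n_1+1))^4$ so that the greedy covering-design bound (\Cref{thm:cov-greedy}) applies to them. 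You also do not handle the balanced case with small groups, where the bound degenerates to $\Omega(g)$ and requires the separate identical-items argument of \Cref{obs:atleastgroup}.
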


Observe that our upper and lower bounds match in the balanced case, regardless of the relationship between the number of groups and the number of agents---for example, this is true even when $n_1,\dots,n_g\in\Theta(1)$.
While our bounds also coincide in the unbalanced case, the corresponding lower bound relies on the condition that $g\in (\log n)^{1-\Omega(1)}$.
We remark that this is a relatively mild condition, as unbalancedness already implies that $g \in O(\log n_1) \le O(\log n)$.
Moreover, our upper bounds come with efficient randomized algorithms.
To gain more intuition on our bounds, notice that if $n_1 = n_2 = \dots = n_g = n/g$, then $\minshare(n_1, \dots, n_g) = \Theta(g\log(n/g))$, whereas if $n_2 + \dots + n_g \in \Theta(1)$, then $\minshare(n_1, \dots, n_g) = \Theta(\log n/\log\log n)$.

In addition to these asymptotic bounds, we also derive concrete non-asymptotic results on $\minshare$ for the case of two groups.
When both groups are of the same size, our upper and lower bounds differ by only a factor of $2$.
Furthermore, our bounds are exactly tight for certain group sizes---for instance, we show that $\minshare(2,1) = \minshare(2,2) = \minshare(3,1) = \minshare(4,1) = 3$ and $\minshare(4,4) =  \minshare(5,5) = 4$.

\subsection{Additional Related Work}
\label{sec:relatedwork}

Although fair division has been studied extensively for several decades, the group aspect was only considered quite recently.
\citet{ManurangsiSu17} assumed that agents' utilities are drawn from probability distributions and showed that an allocation satisfying another important fairness notion, \emph{envy-freeness}, is likely to exist when there are sufficiently many items.
\citet{KyropoulouSuVo20} and \citet{ManurangsiSu22} presented worst-case guarantees with respect to relaxations of envy-freeness.
In particular, the latter authors showed that when the number of groups is constant, the smallest $k$ such that an ``envy-free up to $k$ items'' allocation always exists is $k = \Theta(\sqrt{n})$; unlike for MMS as in our work, for envy-freeness the bound is the same no matter how the $n$ agents are distributed into groups.
\citet{SegalhaleviSu19} investigated \emph{democratic fairness}, where the objective is to provide fairness guarantees to a fraction of the agents in each group; they also considered ordinal MMS, but the corresponding results are mainly restricted to binary utilities.
\citet{SegalhaleviNi19} and \citet{SegalhaleviSu21,SegalhaleviSu23} examined group fairness for \emph{divisible} resources modeled as a cake, while \citet{GhodsiLaMo18} analyzed it in rent division.

Besides the model that we consider, group fairness has also been studied in \emph{individual} fair division \citep{Berliant92,TodoLiHu11,AleksandrovWa18,BenabbouChEl19,ConitzerFrSh19,AzizRe20}.
In this setting, each agent receives a bundle of items and there is no sharing; however, fairness is desired not only for individual agents, but also for groups.
Some of these authors assumed that the groups are fixed in advance (e.g., ethnic, gender, or socioeconomic groups), whereas others required fairness to hold across all groups that could possibly be formed.
\citet{ScarlettTeZi23} explored the interplay between individual and group fairness in this model.

As we mentioned earlier, although the majority of work on MMS concerns cardinal approximations, ordinal relaxations have also attracted growing interest among researchers.
For individual fair division, \citet{AignerhorevSe22} established the existence of a $1$-out-of-$k$ MMS allocation for $k = 2n-2$.
This was later improved to $k = \lfloor 3n/2\rfloor$ \citep{HosseiniSeSe22}, and subsequently to $k = \lceil 4n/3\rceil$ \citep{AkramiGaTa23}.
Furthermore, ordinal MMS has been investigated in the context of indivisible chore allocation \citep{HosseiniSeSe22-chores}, cake cutting \citep{ElkindSeSu21,ElkindSeSu22}, as well as land division \citep{ElkindSeSu23}.

\section{Preliminaries}
\label{sec:prelim}

We use $\log$ to denote the logarithm with base~$2$ and $\ln$ to denote the natural logarithm.
For any positive integer $z$, let $[z] := \{1,\dots,z\}$.

Let $g\ge 2$ and $n_1,\dots,n_g$ be positive integers.
Let $N$ be a set of $n = n_1 + \dots + n_g$ agents partitioned into $g$ groups, where for each $i\in[g]$, the $i$-th group is denoted by $N_i$ and contains $n_i$ agents.
Let $M = [m]$ be the set of items to be allocated.
A \emph{bundle} refers to any (possibly empty) subset of~$M$.
Each agent $a\in N$ has an additive utility function $u_a : 2^M\rightarrow \mathbb{R}_{\ge 0}$ over the items in~$M$; for a single item~$\ell$, we sometimes write $u_a(\ell)$ instead of $u_a(\{\ell\})$.
For $i\in [g]$ and $j\in[n_i]$, we refer to the $j$-th agent in $N_i$ as agent~$(i,j)$, and use $u_{i,j}$ to denote the agent's utility function.
An \emph{allocation} $\mathcal{A} = (A_1,\dots,A_g)$ is an ordered partition of the items in $M$ such that for each $i\in [g]$, bundle~$A_i$ is assigned to group~$N_i$.
For $i\in[g]$, we let $$\beta_i := \frac{\log(n_i + 1)}{\log[(n_2+1)\cdots(n_g+1)]}.$$

\subsection{Ordinal MMS}

Let $p$ be a positive integer, and denote by $\Pi_p(M)$ the set of all (unordered) partitions of $M$ into $p$ parts.
We now introduce our fairness notion of interest in this paper. 

\begin{definition}
The \emph{$1$-out-of-$p$ maximin share} of an agent $a\in N$, denoted by $\mms{p}_a$, is defined as
\begin{align*}
\max_{\{B_1,\dots,B_p\}\in \Pi_p(M)} \,\min_{i\in [p]} \, u_a(B_i).
\end{align*}
An allocation $(A_1,\dots,A_g)$ is said to be an \emph{$\mms{p}$ allocation} if every agent's utility for her group's bundle is at least her $1$-out-of-$p$ maximin share.
For an agent $a\in N$ and a set of items $S\subseteq M$, we write $\mms{p}_a(S)$ to denote agent $a$'s $1$-out-of-$p$ maximin share when restricted to $S$.
\end{definition}

For any $g\ge 2$ and positive integers $n_1,\dots,n_g$, we denote by $\minshare(n_1, \dots, n_g)$ the smallest integer $p$ such that for any $g$ groups containing $n_1,\dots,n_g$ agents respectively, there exists an $\mms{p}$ allocation.
We make some observations about $\minshare$.
First, observe that if we add extra agents to existing groups, the problem can only become more difficult, as we still need to satisfy the original agents.

\begin{observation} \label{obs:dec-monotone}
For any positive integers $n_1, \dots, n_g$ and $n'_1, \dots, n'_g$ such that $n_i\ge n'_i$ for each $i\in [g]$, we have $\minshare(n_1, \dots, n_g) \geq \minshare(n'_1, \dots, n'_g)$.
\end{observation}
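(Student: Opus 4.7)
The plan is to prove this by a straightforward padding (or ``dummy agent'') argument, extending any small instance to a larger one without changing the structure of the MMS requirements on the original agents.

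More concretely, let $p := \minshare(n_1, \dots, n_g)$, and consider an arbitrary instance $I'$ with groups $N'_1, \dots, N'_g$ of sizes $n'_1, \dots, n'_g$ and item set $M$. I would construct an augmented instance $I$ with groups $N_1, \dots, N_g$ of sizes $n_1, \dots, n_g$ over the same item set $M$, as follows: keep all $n'_i$ original agents in group $i$, and add $n_i - n'_i$ new ``dummy'' agents to group $i$, each given the identically zero utility function (so their $1$-out-of-$p$ MMS is $0$ and can be satisfied by any bundle). By definition of $p$, instance $I$ admits a $\mms{p}$ allocation $\mathcal{A} = (A_1, \dots, A_g)$.

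Now I would argue that this same $\mathcal{A}$ is a $\mms{p}$ allocation for $I'$. Each original agent's utility function, and hence their $1$-out-of-$p$ maximin share, is identical across $I$ and $I'$; and since $\mathcal{A}$ satisfies every agent's MMS in $I$, it in particular satisfies each original agent $a \in N'_i$ with respect to the bundle $A_i$ of group $N'_i$. Therefore $I'$ also admits a $\mms{p}$ allocation, which by definition of $\minshare$ gives $\minshare(n'_1, \dots, n'_g) \le p = \minshare(n_1, \dots, n_g)$, as desired.

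There is essentially no hard step; the only thing to be careful about is to justify that using the zero utility function for dummy agents is legitimate (it is additive and non-negative, consistent with the model in Section 2), and that all relevant quantities on the original agents are preserved. If one prefers to avoid dummy agents with zero utility, an equally simple alternative is to let each new agent in group~$i$ copy the utility function of an existing agent in $N'_i$ (which exists because $n'_i \ge 1$); the same argument then goes through unchanged.
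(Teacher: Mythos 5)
Your proof is correct and formalizes exactly the argument the paper gives in one sentence (padding a worst-case small instance with extra agents whose constraints are vacuous or duplicated, so any $\mms{p}$ allocation for the padded instance works for the original one). Both of your constructions for the dummy agents are legitimate within the paper's model, so there is nothing to fix.
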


Similarly, if we remove a group, the problem can only become easier, as the items originally assigned to that group can be allocated arbitrarily among the remaining groups.

\begin{observation} \label{obs:dec-remove-group}
For any positive integers $n_1, \dots, n_g$ and $i \in [g]$, we have $\minshare(n_1, \dots, n_g) \geq \minshare(n_1, \dots, n_i)$.
\end{observation}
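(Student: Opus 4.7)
The plan is to reduce an arbitrary instance on $i$ groups to an instance on $g$ groups by adding dummy groups, then invoke the definition of $\minshare(n_1,\dots,n_g)$ and collapse the resulting allocation back to $i$ groups. Concretely, fix any instance with groups $N_1,\dots,N_i$ of sizes $n_1,\dots,n_i$, item set $M$, and utilities $u_{i',j}$ for $(i',j)$ with $i' \in [i]$ and $j \in [n_{i'}]$. I would extend it to an instance on $g$ groups of sizes $n_1,\dots,n_g$ over the same item set $M$ by introducing $n_{i+1}+\dots+n_g$ additional agents in groups $N_{i+1},\dots,N_g$ with arbitrary (e.g., identically zero) utility functions. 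The key point is that every original agent $a$ has exactly the same utility function over $M$ in both instances, so $\mms{p}_a$ is identical in both.

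Setting $p := \minshare(n_1,\dots,n_g)$, the definition guarantees an $\mms{p}$ allocation $(A_1,\dots,A_g)$ for the extended instance, meaning in particular that $u_a(A_{i'}) \ge \mms{p}_a$ for every $i' \in [i]$ and every agent $a \in N_{i'}$. To obtain an allocation for the original instance, I would take $A'_{i'} := A_{i'}$ for $i' \in \{2,\dots,i\}$ and dump all the leftover items on the first group, setting $A'_1 := A_1 \cup A_{i+1} \cup \dots \cup A_g$ (any fixed redistribution of the leftovers among the first $i$ groups works). Because utilities are non-negative and additive, adding items only increases value, so $u_a(A'_{i'}) \ge u_a(A_{i'}) \ge \mms{p}_a$ for every original agent $a$. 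Hence $(A'_1,\dots,A'_i)$ is an $\mms{p}$ allocation of $M$ among the $i$ groups, which shows that $\minshare(n_1,\dots,n_i) \le p = \minshare(n_1,\dots,n_g)$.

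There is no real obstacle here: the argument is pure monotonicity, relying only on the facts that MMS values are computed per-agent and are unaffected by the presence of other agents, and that non-negative additive utilities are monotone under bundle enlargement. The only thing to be slightly careful about is to ensure the extended instance is well-defined (i.e., to actually specify utility functions for the dummy agents, which can be arbitrary since their constraints are not used), and to note that the observation as stated is formulated for the first $i$ coordinates, though by relabeling groups the same argument shows the analogous statement for removing any subset of groups.
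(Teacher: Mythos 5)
Your proposal is correct and matches the paper's own (one-sentence) justification: extend the $i$-group instance with dummy groups, obtain an $\mms{p}$ allocation for the $g$-group instance, and reassign the dummy groups' items arbitrarily among the remaining groups, which can only increase utilities under non-negative additive valuations. The extra care you take in specifying the dummy agents' utilities and noting the relabeling for non-prefix subsets is fine but not needed beyond what the paper already asserts.
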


By considering $g-1$ identical items, we also have that $p$ needs to be at least the number of groups.

\begin{observation} \label{obs:atleastgroup}
For any positive integers $n_1, \dots, n_g$, we have $\minshare(n_1, \dots, n_g) \geq g$.
\end{observation}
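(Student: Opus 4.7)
The proof is indicated by the hint ``By considering $g-1$ identical items.'' My plan is to exhibit an explicit instance that rules out any $\mms{p}$ allocation when $p \leq g-1$.

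First I would construct the following instance: take $m = g - 1$ items and let every agent have the additive utility function that assigns value $1$ to each item (so every agent has identical unit utilities). Since the groups and their sizes $n_1, \ldots, n_g$ are fixed but otherwise arbitrary, this is a valid instance for any choice of $n_1, \ldots, n_g$.

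Next I would compute the $1$-out-of-$p$ MMS of each agent for any $p \leq g-1$. Since all items are interchangeable and there are $g-1 \geq p$ of them, the agent can partition $M$ into $p$ nonempty bundles (e.g., give each of the first $p-1$ bundles a single item and place the remaining $g-p$ items in the last bundle). Every bundle in this partition has value at least $1$, so $\mms{p}_a \geq 1$ for each agent $a$.

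Finally I would invoke the pigeonhole principle: any allocation $(A_1, \ldots, A_g)$ partitions $g-1$ items among $g$ groups, so some $A_i$ must be empty. The agents in group $N_i$ then receive utility $0$ from their bundle, which is strictly less than their MMS value of at least $1$. Thus no $\mms{p}$ allocation exists for any $p \leq g-1$, establishing $\minshare(n_1, \ldots, n_g) \geq g$. I do not anticipate any real obstacle here; the only point requiring minor care is verifying that the MMS partition into $p$ bundles is well-defined, which holds precisely because $p \leq g-1$ equals the number of items.
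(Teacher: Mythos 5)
Your proposal is correct and is exactly the argument the paper intends with its remark about ``considering $g-1$ identical items'': for $p \le g-1$ each agent's $1$-out-of-$p$ MMS equals $1$ since the $g-1$ unit-value items can be spread over $p \le g-1$ nonempty bundles, while pigeonhole forces some group to receive an empty bundle. Nothing is missing.
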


\subsection{Covering Designs}

When deriving lower bounds, we will make use of \emph{covering designs}, which are extensively studied in combinatorics. 

\begin{definition}[e.g.,~\citep{GordonPaKu96}]
For positive integers $m,s,t$, an \emph{$(m, s, t)$-covering design} is a collection $\{S_1, \dots, S_r\}$ of subsets of $[m]$, each of size $s$, with the property that for any subset $T \subseteq [m]$ of size at most $t$, there exists $i \in [r]$ such that $T \subseteq S_i$. 
\end{definition}

Denote by $C(m, s, t)$ the size of the smallest $(m, s, t)$-covering design.
A trivial upper bound on $C(m, s, t)$ can be obtained by observing that the collection of all subsets of size~$s$ is sufficient to cover any set of size $t$.

\begin{observation}\label{obs:cov-trivial}
For any positive integers $m \geq s \geq t$, we have $C(m, s, t) \leq \binom{m}{s}$.
\end{observation}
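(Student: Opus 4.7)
The plan is to exhibit an explicit $(m,s,t)$-covering design of size $\binom{m}{s}$, namely the collection of \emph{all} size-$s$ subsets of $[m]$, and verify that it meets the definition. The argument the authors hint at immediately before the statement is precisely this, so the job is essentially to unpack one sentence.

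Concretely, I would let $\mathcal{C} = \{S \subseteq [m] : |S| = s\}$, which has cardinality $\binom{m}{s}$ (well-defined because $m \geq s$). To check the covering property, take any $T \subseteq [m]$ with $|T| \leq t$. Since $t \leq s \leq m$, I can pick any $s - |T|$ elements from $[m] \setminus T$ (there are $m - |T| \geq m - t \geq m - s \geq 0$ such elements, so enough are available) and adjoin them to $T$ to obtain a set $S$ of size exactly $s$. Then $T \subseteq S \in \mathcal{C}$, so $\mathcal{C}$ is an $(m,s,t)$-covering design. By definition of $C(m,s,t)$ as the minimum size of such a design, $C(m,s,t) \leq |\mathcal{C}| = \binom{m}{s}$.

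There is no real obstacle here; the only point worth being careful about is the chain of inequalities $t \leq s \leq m$ that ensures the extension step is feasible, which is exactly the hypothesis of the observation. No clever construction, probabilistic argument, or induction is required.
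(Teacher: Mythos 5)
Your proof is correct and is exactly the paper's (one-sentence) argument: the collection of all size-$s$ subsets of $[m]$ is an $(m,s,t)$-covering design, so $C(m,s,t) \leq \binom{m}{s}$. The only cosmetic point is that the feasibility of the extension step follows most directly from $m - |T| \geq s - |T|$ (i.e.\ $m \geq s$), rather than from the chain $m - |T| \geq m - s \geq 0$ you wrote, but this does not affect correctness.
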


The following two upper bounds are known in the literature.

\begin{lemma}[{\cite[Thm.~5]{ReesStWe99}}] \label{thm:cov-enum}
For any positive integers $m, k, s, t$ such that $m \geq k + st$, it holds that $C(m, m - k, t) \leq \binom{t + \lceil k / s \rceil}{t}$.
\end{lemma}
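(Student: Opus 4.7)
The plan is to reformulate the problem via complements and then give an explicit construction. An $(m, m-k, t)$-covering design is equivalent to a family $\mathcal{F}$ of $k$-subsets of $[m]$ (the complements of the blocks) with the property that every $t$-subset $T \subseteq [m]$ is disjoint from at least one member of $\mathcal{F}$. My goal will be to build such an $\mathcal{F}$ of size $\binom{t+k'}{t}$, where $k' := \lceil k/s \rceil$.

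The first step is to partition $[m]$ into $t+k'$ parts $G_1, \dots, G_{t+k'}$, designating $t$ of them as ``large'' and $k'$ as ``small''. The $k'$ small parts are chosen with sizes in $\{\lfloor k/k' \rfloor, \lceil k/k' \rceil\}$ so that they sum to exactly $k$; note that each such size is at most $s$, since $sk' \geq k$ by definition of $k'$. The remaining $m - k \geq st$ elements are distributed among the $t$ large parts so that each large part has size at least $s$. This is feasible precisely because of the hypothesis $m \ge k + st$.

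Next, for each $k'$-subset $I \subseteq [t+k']$, pick an arbitrary $k$-subset $C_I \subseteq \bigcup_{i \in I} G_i$ and put $C_I$ in $\mathcal{F}$. The number of such sets is $\binom{t+k'}{k'} = \binom{t+k'}{t}$, giving the target bound. For the covering property: any $t$-subset $T \subseteq [m]$ meets at most $t$ of the $t+k'$ parts, so at least $k'$ parts are entirely disjoint from $T$; choosing $I$ to be a $k'$-subset consisting of such parts yields $C_I \subseteq \bigcup_{i \in I} G_i$ disjoint from $T$, so $[m] \setminus C_I$ is a block of size $m-k$ containing $T$.

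The main obstacle is verifying that $C_I$ can indeed be chosen, i.e., that $|\bigcup_{i \in I} G_i| \geq k$ for every $k'$-subset $I$. I would handle this by a swap argument: when $I$ equals the set of all small parts, the union has size exactly $k$; any substitution that replaces a small part (size $\leq s$) by a large part (size $\geq s$) can only increase the total. Hence every $k'$-subset of parts has union size at least $k$. The subtle point is why one separates the parts into ``small'' and ``large'' using the threshold $s$ rather than simply balancing the partition evenly; the even-partition approach can fail near the boundary case $s \nmid k$, and packaging the excess $m-k$ into the $t$ large parts of size $\geq s$ is exactly what makes the construction work under the stated hypothesis.
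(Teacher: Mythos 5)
Your proof is correct, and it fills in an argument the paper itself does not give: Lemma~\ref{thm:cov-enum} is imported from Rees, Stinson, and Wei without proof. Your construction---pass to complements, partition $[m]$ into $\lceil k/s\rceil$ ``small'' parts of total size $k$ (each of size at most $s$) plus $t$ ``large'' parts of size at least $s$, and take one $k$-set inside each union of $\lceil k/s\rceil$ parts---is essentially the standard group-partition proof of the cited theorem, and all the checks go through: each small part has size $\lceil k/\lceil k/s\rceil\rceil \le s$, the hypothesis $m \ge k+st$ makes the large parts feasible, and your swap argument correctly shows every union of $\lceil k/s\rceil$ parts has size at least $k$ (dropping $k'-a$ small parts loses at most $(k'-a)s$, which the $k'-a$ large parts recoup). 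Nothing to fix.
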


\begin{lemma}[\citep{ErdosSp74}\footnote{See, e.g., \citep[p.~270]{GordonPaKu95}.}] \label{thm:cov-greedy}
For any positive integers $m \geq s \geq t$, we have $C(m, s, t) \leq \frac{\binom{m}{t}}{\binom{s}{t}} \cdot \left(1 + \ln\binom{s}{t}\right)$.
\end{lemma}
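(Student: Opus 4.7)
The plan is to prove this via the standard greedy/probabilistic argument for set cover, specialized to the incidence structure between $s$-subsets and $t$-subsets of $[m]$. Think of the universe as the collection of all $\binom{m}{t}$ $t$-subsets of $[m]$, where an $s$-subset $S \subseteq [m]$ ``covers'' exactly the $\binom{s}{t}$ $t$-subsets contained in $S$. A valid covering design corresponds to picking $s$-subsets whose coverage exhausts the universe.

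The first step is the averaging observation: by double counting (equivalently, by picking a uniformly random $s$-subset of $[m]$), a fixed $t$-subset $T$ is contained in a random $s$-subset with probability $\binom{m-t}{s-t}/\binom{m}{s} = \binom{s}{t}/\binom{m}{t}$. Hence for any family $\mathcal{T}$ of currently uncovered $t$-subsets, a random $s$-subset covers $|\mathcal{T}|\cdot\binom{s}{t}/\binom{m}{t}$ of them in expectation, and so some $s$-subset achieves at least this value.

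The second step is iteration. Starting from the full universe of $t$-subsets and greedily choosing, at each iteration, an $s$-subset that shrinks the uncovered family by at least a $\binom{s}{t}/\binom{m}{t}$ fraction, after $r$ iterations the number of uncovered $t$-subsets is at most
\begin{align*}
\binom{m}{t}\left(1 - \frac{\binom{s}{t}}{\binom{m}{t}}\right)^{r} \leq \binom{m}{t}\exp\!\left(-\frac{r\binom{s}{t}}{\binom{m}{t}}\right).
\end{align*}
Now set $r := \left\lfloor \frac{\binom{m}{t}}{\binom{s}{t}}\ln\binom{s}{t}\right\rfloor$, so that the right-hand side is at most $\binom{m}{t}/\binom{s}{t}$. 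Finally, each remaining uncovered $t$-subset can be absorbed at unit cost by adding any single $s$-subset that contains it (one exists since $s \geq t$). The total number of $s$-subsets is therefore at most
\begin{align*}
r + \frac{\binom{m}{t}}{\binom{s}{t}} \;\leq\; \frac{\binom{m}{t}}{\binom{s}{t}}\bigl(1 + \ln\tbinom{s}{t}\bigr),
\end{align*}
which is exactly the claimed bound.

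There is no real obstacle in this argument; the two standard ingredients are (i) the exact computation of the covering probability $\binom{s}{t}/\binom{m}{t}$ and (ii) the inequality $(1-x)^r\le e^{-rx}$. The only delicate point is the choice of $r$ to balance the greedy phase against the final ``mop-up'' phase, but the calibration above makes both contributions add to $(1 + \ln\binom{s}{t})\cdot \binom{m}{t}/\binom{s}{t}$ cleanly.
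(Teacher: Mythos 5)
Your overall strategy---viewing the $t$-subsets as a set-cover universe, computing the covering ratio $\binom{s}{t}/\binom{m}{t}$, running greedy, and mopping up the survivors one at a time---is the standard route to this bound (the paper gives no proof and simply cites Erd\H{o}s--Spencer), and your first two steps are correct. The gap is in the calibration of $r$, and it is a real one: you need $r$ to satisfy two \emph{opposite} inequalities at once. For your displayed bound $\binom{m}{t}\exp\bigl(-r\binom{s}{t}/\binom{m}{t}\bigr)$ to be at most $\binom{m}{t}/\binom{s}{t}$ you need $r \ge \frac{\binom{m}{t}}{\binom{s}{t}}\ln\binom{s}{t}$, so the floor in your choice of $r$ must be a ceiling; but for the final tally $r + \binom{m}{t}/\binom{s}{t}$ to stay below $\frac{\binom{m}{t}}{\binom{s}{t}}\bigl(1+\ln\binom{s}{t}\bigr)$ you need $r \le \frac{\binom{m}{t}}{\binom{s}{t}}\ln\binom{s}{t}$, which is exactly what the floor gives. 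Since $\frac{\binom{m}{t}}{\binom{s}{t}}\ln\binom{s}{t}$ is generally not an integer, you cannot have both: with the ceiling the total becomes $\frac{\binom{m}{t}}{\binom{s}{t}}\bigl(1+\ln\binom{s}{t}\bigr)+1$, and with the floor the number of survivors can be as large as $e\cdot\binom{m}{t}/\binom{s}{t}$, giving $\frac{\binom{m}{t}}{\binom{s}{t}}\bigl(e+\ln\binom{s}{t}\bigr)$. Either way you establish the lemma only up to an additive constant, not with the stated constant. (This slack is immaterial for how the lemma is used in \Cref{thm:lb-2}, but the statement you are proving is sharper.)

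Two standard repairs give the exact constant. (i) Analyze greedy via the harmonic-number charging argument: each greedy set is charged to the elements it newly covers, and any fractional cover of weight $\tau^*$ absorbs total charge at most $H\bigl(\binom{s}{t}\bigr)\tau^*$; since assigning weight $1/\binom{m-t}{s-t}$ to every $s$-subset is a fractional cover of weight $\binom{m}{s}/\binom{m-t}{s-t} = \binom{m}{t}/\binom{s}{t}$, greedy uses at most $H\bigl(\binom{s}{t}\bigr)\cdot\binom{m}{t}/\binom{s}{t} \le \bigl(1+\ln\binom{s}{t}\bigr)\binom{m}{t}/\binom{s}{t}$ sets. (ii) Use the probabilistic deletion argument with a \emph{real-valued} inclusion probability: include each $s$-subset independently with probability $\pi = \ln\binom{s}{t}/\binom{m-t}{s-t}$ (if $\pi>1$ the bound is trivial), then add one set per uncovered $t$-subset; the expected size is $\pi\binom{m}{s} + \binom{m}{t}(1-\pi)^{\binom{m-t}{s-t}} \le \frac{\binom{m}{t}}{\binom{s}{t}}\ln\binom{s}{t} + \frac{\binom{m}{t}}{\binom{s}{t}}$, with no integrality loss because $\pi$ need not be rounded.
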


\subsection{Concentration Bounds}

We will also use the following concentration inequalities often referred to as \emph{multiplicative Chernoff bounds}. 

\begin{lemma} \label{thm:chernoff}
Let $X_1, \dots, X_k$ be independent random variables taking values in $[0, 1]$, and let $Z := X_1 + \dots + X_k$. 
Then, for every $\delta \geq 0$, we have
\begin{align} \label{eq:chernoff-lb}
\Pr[Z \leq (1 - \delta)\E[Z]] \leq \exp\left(\frac{-\delta^2}{2}\E[Z]\right)
\end{align}
and
\begin{align} \label{eq:chernoff-ub}
\Pr[Z \geq (1 + \delta)\E[Z]] \leq \left(\frac{e^\delta}{(1 + \delta)^{1+\delta}}\right)^{\E[Z]}.
\end{align}
\end{lemma}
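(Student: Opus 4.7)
The plan is to follow the classical moment-generating-function (MGF) route via Markov's inequality. For the upper tail \eqref{eq:chernoff-ub}, I would fix any $t > 0$ and apply Markov to the nonnegative random variable $e^{tZ}$, obtaining
\begin{align*}
\Pr[Z \geq (1+\delta)\E[Z]] \;\leq\; e^{-t(1+\delta)\E[Z]}\, \E[e^{tZ}].
\end{align*}
By independence, the MGF factorizes as $\E[e^{tZ}] = \prod_{i=1}^{k}\E[e^{tX_i}]$.

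Next, since each $X_i$ takes values in $[0,1]$, I would use the convexity of $x \mapsto e^{tx}$ to obtain the linear upper bound $e^{tx} \leq 1 + x(e^t - 1)$ on $[0,1]$. Taking expectations and applying $1 + y \leq e^y$ yields $\E[e^{tX_i}] \leq \exp(\E[X_i](e^t - 1))$, hence $\E[e^{tZ}] \leq \exp(\E[Z](e^t - 1))$. Substituting back and optimizing the choice of $t$ by setting $t = \ln(1 + \delta)$ reproduces exactly $\left(\frac{e^\delta}{(1+\delta)^{1+\delta}}\right)^{\E[Z]}$, establishing \eqref{eq:chernoff-ub}.

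For the lower tail \eqref{eq:chernoff-lb}, I would run the symmetric argument: apply Markov to $e^{-tZ}$ with $t > 0$ on the event $\{Z \leq (1-\delta)\E[Z]\} = \{e^{-tZ} \geq e^{-t(1-\delta)\E[Z]}\}$, bound $\E[e^{-tX_i}] \leq \exp(\E[X_i](e^{-t} - 1))$ in the same way as before, and choose $t = -\ln(1 - \delta)$ (which is positive for $\delta \in (0,1)$; the case $\delta \geq 1$ is trivial since $Z \geq 0$). This yields the intermediate bound $\Pr[Z \leq (1-\delta)\E[Z]] \leq \left(\frac{e^{-\delta}}{(1-\delta)^{1-\delta}}\right)^{\E[Z]} = \exp\bigl(-\E[Z]\cdot[(1-\delta)\ln(1-\delta)+\delta]\bigr)$.

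The only step that is not purely mechanical is the conversion of this last expression to the stated Gaussian-like form $\exp(-\delta^2 \E[Z]/2)$, which reduces to proving the scalar inequality $(1-\delta)\ln(1-\delta) + \delta \geq \delta^2/2$ for $\delta \in [0, 1)$. I would verify this by noting that both sides and their first derivatives vanish at $\delta = 0$, and then comparing second derivatives: the left side has second derivative $1/(1-\delta) \geq 1$, which dominates the constant $1$ on the right. This is the only genuinely analytic estimate in the whole argument, and being a standard textbook inequality it poses no real obstacle.
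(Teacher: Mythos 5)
Your proof is correct and is the standard Chernoff argument: Markov's inequality applied to $e^{\pm tZ}$, the convexity bound $e^{tx}\le 1+x(e^t-1)$ on $[0,1]$ to handle non-Bernoulli variables, optimization at $t=\ln(1+\delta)$ (resp.\ $t=-\ln(1-\delta)$), and the scalar inequality $(1-\delta)\ln(1-\delta)+\delta\ge\delta^2/2$ verified by comparing second derivatives. The paper itself offers no proof---it cites this as a known multiplicative Chernoff bound---so there is no authorial argument to compare against; yours is the textbook derivation one would expect. The only point worth tightening is the boundary case $\delta=1$: the event $\{Z\le 0\}=\{Z=0\}$ can have positive probability, so it is not dispatched merely by $Z\ge 0$; instead, observe that $\{Z\le 0\}\subseteq\{Z\le(1-\delta')\E[Z]\}$ for every $\delta'<1$ and let $\delta'\uparrow 1$ in the bound already established, which gives $\Pr[Z=0]\le e^{-\E[Z]}\le e^{-\E[Z]/2}$. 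This is a one-line patch and does not affect the substance of the argument.
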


\section{Upper Bounds}
\label{sec:upper}

In this section, we derive upper bounds on $\minshare$ by proving \Cref{thm:ub-main}.
We will use the following algorithm for approximating the $\mms{p}$ value.

\begin{lemma}[\citep{DeuermeyerFrLa82}] \label{thm:mms-approx}
For any agent $a\in N$ and any positive integer $p$, there is a polynomial-time algorithm that computes a value $t_a$ such that $\mms{p}_a \geq t_a \geq \frac{3}{4} \cdot \mms{p}_a$.
\end{lemma}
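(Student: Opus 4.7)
The plan is to use the classical Longest Processing Time (LPT) greedy algorithm: sort the items in $M$ in non-increasing order of $u_a$-value, then process them one by one, placing each into a bundle with the currently smallest total $u_a$-value. This clearly runs in polynomial time. Let $(B_1, \dots, B_p)$ be the resulting partition and set $t_a := \min_{i \in [p]} u_a(B_i)$. Since this is a valid partition of $M$ into $p$ parts, the inequality $t_a \le \mms{p}_a$ is immediate from the definition of $\mms{p}_a$ as a maximum over such partitions.

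For the reverse direction $t_a \ge \tfrac{3}{4}\mms{p}_a$, write $\mu := \mms{p}_a$ and suppose for contradiction that $t_a < \tfrac{3}{4}\mu$. Choose $i^*$ with $u_a(B_{i^*}) = t_a$, and let $\ell^*$ be the item placed last into $B_{i^*}$, with $v^* := u_a(\ell^*)$. LPT's rule yields two useful invariants: (i) immediately before $\ell^*$ was placed, $B_{i^*}$ had the smallest running sum $t_a - v^*$, so every other bundle already had running sum at least $t_a - v^*$; and (ii) items are processed in decreasing $u_a$-order, so every item placed no later than $\ell^*$ has value at least $v^*$, while every item placed strictly after $\ell^*$ has value at most $v^*$.

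From here, the proof would proceed by a case split on $v^*$ relative to $\mu$. When $v^*$ is small, invariant (i), together with the fact that every bundle of an $\mms{p}_a$-witness partition has value at least $\mu$, produces an inconsistent pair of bounds on $\sum_{\ell \in M} u_a(\ell)$ (both a lower bound of $p\mu$ from the witness partition and an upper bound growing with $t_a$ from LPT's structure). When $v^*$ is large, the items placed no later than $\ell^*$ are precisely the ``large'' items by invariant (ii); counting how many such items each part of an $\mms{p}_a$-witness partition can contain while still having total value at least $\mu$, and comparing this with how LPT must have distributed the same items across the $p$ bundles, contradicts the assumption $t_a < \tfrac{3}{4}\mu$. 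I expect the large-$v^*$ case to be the main obstacle, since the analysis must simultaneously juggle LPT's decreasing-order invariant with the distribution of large items across the bundles of the optimal witness partition, and care is needed because bundles in the witness partition can have total value strictly greater than $\mu$, so a naive ``at most one large item per bundle'' argument does not suffice.
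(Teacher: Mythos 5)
You have picked the right algorithm: Lemma~\ref{thm:mms-approx} is imported by the paper as a black-box citation to Deuermeyer, Friesen, and Langston (1982), and their result is precisely the $3/4$-guarantee of LPT for the max--min (machine covering) objective, so your first paragraph --- LPT runs in polynomial time, and $t_a \le \mms{p}_a$ because LPT's output is itself a feasible $p$-partition --- is correct and is exactly how the cited source proceeds. The problem is that everything after that is an outline rather than a proof: the entire content of the lemma is the constant $3/4$, and your second and third paragraphs defer the argument that establishes it (``the proof would proceed by a case split\dots'', ``I expect the large-$v^*$ case to be the main obstacle''). That case analysis is not routine; in the original paper it is a multi-page minimal-counterexample argument with several subcases, and nothing in your sketch substitutes for it.

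More concretely, the split on $v^*$ (the value of the last item placed into the \emph{minimum} bundle) does not control the quantities that actually appear in the natural counting bound. The standard greedy accounting is: for each bundle $B_j \ne B_{i^*}$ with last-added item of value $v_j$, the bundle was minimal just before that item arrived, so $u_a(B_j) \le t_a + v_j$; summing and comparing with $u_a(M) \ge p\mu$ gives $t_a \ge \mu - \frac{p-1}{p}\max_j v_j$. This yields $3/4$ only when \emph{every} item has value at most $\mu/4$ --- but $v^*$ being small does not imply the $v_j$ of the other bundles are small, since their last items may have been placed before $\ell^*$ and hence have value at least $v^*$ with no upper bound from your invariant (ii). So your ``small $v^*$'' case does not close as stated, and your ``large $v^*$'' case is exactly the part you acknowledge you have not done. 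As written, the proposal establishes $t_a \le \mms{p}_a$ and a weaker approximation guarantee, but not the $3/4$ factor; to be a complete proof it would need to reproduce (or correctly cite) the full DFL analysis, or replace it with another method achieving a $3/4$ (or better) estimate of $\mms{p}_a$, such as a PTAS for max--min partitioning.
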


We will also use the following results by \citet{AignerhorevSe22}.\footnote{The first result is stated as Lemma~5.1 in their paper, and the second follows from the same proof as their Theorem~1.4.}

\begin{lemma}[\citep{AignerhorevSe22}] \label{lem:greedy-mms-lb}
Suppose that an agent~$a$ has utility at most $1$ for each item, and $u_a(M) \ge 2p$ for some positive integer $p$.
Then, $\mms{p}_a \geq 1$.
\end{lemma}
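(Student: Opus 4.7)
The plan is a short greedy bag-filling argument. Initialize $p$ empty bags and process the items of $M$ in any order, maintaining a ``current'' bag. For each item, add it to the current bag; the moment the current bag's value reaches or exceeds~$1$, ``close'' it and advance to the next empty bag. Every closed bag then has value at least~$1$ by construction. Moreover, since every item has value at most~$1$ and the current bag had value strictly less than~$1$ just before the item that closed it was added, each closed bag has value strictly less than~$2$.

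I claim this procedure closes at least $p$ bags. Suppose for contradiction that after all items have been processed, only $k < p$ bags are closed while the remaining unclosed ``current'' bag $B$ holds value strictly less than~$1$. Then
\begin{align*}
u_a(M) \;<\; 2k + u_a(B) \;<\; 2(p-1) + 1 \;=\; 2p - 1,
\end{align*}
contradicting the hypothesis $u_a(M) \geq 2p$. Hence the procedure closes at least $p$ bags. Take the first $p$ closed bags and, if any items remain unprocessed after the $p$-th closing (or sit in later bags), dump them arbitrarily into (say) the first bag; adding items cannot decrease its value. The result is a partition of $M$ into $p$ parts, each of value at least~$1$, which witnesses $\mms{p}_a \geq 1$.

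The argument is elementary, and the only step that deserves care is the upper cap of~$2$ on closed bags, which relies on closing the bag the instant its value hits~$1$ (not after some larger threshold) together with the per-item bound $u_a(\ell) \leq 1$. I do not anticipate a significant obstacle beyond this bookkeeping.
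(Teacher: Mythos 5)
Your greedy bag-filling argument is correct: closing each bag the moment it reaches value $1$ caps every closed bag strictly below $2$ (using $u_a(\ell)\le 1$), so fewer than $p$ closed bags would force $u_a(M) < 2(p-1)+1 < 2p$, a contradiction. The paper cites this lemma from \citet{AignerhorevSe22} without reproving it, and your argument is essentially the standard proof given there, so there is nothing to fix.
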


\begin{lemma}[\citep{AignerhorevSe22}] \label{thm:ub-indiv}
If $n_1 = \dots = n_g = 1$, there is a polynomial-time algorithm that, given values $(t_a)_{a\in N}$ such that $t_a \le \mms{(2n-2)}_a$ for all $a\in N$, outputs an allocation that gives each agent $a\in N$ utility at least $t_a$.
\end{lemma}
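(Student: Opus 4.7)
The plan is to prove this by induction on $n$. First, normalize each utility function so that $t_a = 1$ for every agent (agents with $t_a = 0$ are trivially satisfied by any bundle, so may be ignored). The base case $n = 1$ is immediate: the unique agent receives all of $M$.

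For the inductive step, distinguish two cases. If some agent $a$ values some item $g$ at $u_a(g) \ge 1$, allocate $g$ to $a$ (which satisfies $a$) and recurse on the remaining $n - 1$ agents over $M \setminus \{g\}$ with parameter $2(n-1) - 2 = 2n - 4$. The recursion is justified by a bundle-merging argument: for each remaining agent $a'$, start from their $(2n-2)$-partition of $M$ with every bundle worth $\ge 1$, delete $g$ from its bundle $B$, and merge the (possibly low-value) shrunken bundle $B \setminus \{g\}$ with any other bundle $B'$. The merged bundle has value $u_{a'}(B \setminus \{g\}) + u_{a'}(B') \ge 0 + 1 = 1$, yielding a $(2n-3)$-partition of $M \setminus \{g\}$ with every part worth $\ge 1$. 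Hence $\mms{(2n-3)}_{a'}(M \setminus \{g\}) \ge 1$, and monotonicity of $\mms{p}$ in $p$ gives $\mms{(2n-4)}_{a'}(M \setminus \{g\}) \ge 1$ as required.

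If no agent values any remaining item at $\ge 1$, we are in the \emph{small items} regime in which every item has value strictly less than $1$ for every agent. Here $\mms{(2n-2)}_a \ge 1$ implies $u_a(M) \ge 2(n-1)$ for every agent, which gives ample total value. The plan is to run a greedy procedure that processes items in a fixed order and assigns each to an agent whose current bundle has the smallest value under their own utility, with tie-breaking or compatibility constraints to guarantee that every agent's final value reaches $\ge 1$. Correctness can be attempted via a contradiction argument: if some agent $a$ ends with value below $1$, then every item given to another agent was assigned at a moment when that agent's running value was at most $a$'s running value, which combined with the per-item bound of $< 1$ should contradict the lower bound $u_a(M) \ge 2(n-1)$.

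The main obstacle lies in the small-items regime, because different agents value items heterogeneously --- items that look good to one agent under the ``min current value'' rule may be worthless to the one who is being starved. The resolution is to use the stronger structural hypothesis $\mms{(2n-2)}_a \ge 1$ rather than merely its consequence $u_a(M) \ge 2(n-1)$; for example, via a bipartite matching argument between agents and the bundles of a reference MMS partition, or by interleaving big-item removal with greedy small-item distribution so that the inductive invariant is preserved throughout. Turning this plan into a polynomial-time algorithm is straightforward once the combinatorial argument is in place, since both the merge construction and the greedy selection are computable in polynomial time (and the exact MMS partitions required for the analysis need not actually be computed --- only their existence is used).
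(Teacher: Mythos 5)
There is a genuine gap. First, note that the paper does not prove this lemma at all: it is imported from Aigner-Horev and Segal-Halevi, whose argument (per the paper's footnote, the proof of their Theorem~1.4) is based on the lone-divider technique together with their Lemma~5.1. Your big-item reduction is sound --- removing an item $g$ with $u_a(g)\ge t_a$, satisfying $a$ with $\{g\}$, and using the merge argument to show $\mms{(2n-4)}_{a'}(M\setminus\{g\})\ge \mms{(2n-3)}_{a'}(M\setminus\{g\})\ge t_{a'}$ for every other agent $a'$ is correct, and it does not require computing any MMS values. But this reduction only peels off the easy case; the entire difficulty of the lemma lives in the small-items regime, and there your proposal stops at a plan with an acknowledged obstacle and a vague list of possible fixes (``bipartite matching argument,'' ``interleaving''). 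That is not a proof.

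Concretely, the two accounting schemes you gesture at both fail. The min-current-value greedy gives no control whatsoever over $u_a(A_b)$ for $b\ne a$ (it only constrains $u_b(A_b)$ at assignment time), so the starving scenario you yourself describe is fatal, and the hypothesis $u_a(M)\ge 2n-2$ cannot rescue it. The more natural bag-filling variant (hand over a bag as soon as some agent values it at $\ge 1$) also does not close: in the small-items regime a bag given to another agent can be worth up to just under $2$ to a remaining agent $a$ (just under $1$ before its last item was added, plus a last item worth just under $1$), so $n-1$ such bags can consume essentially all of $u_a(M)\ge 2n-2$, leaving $a$ with a positive but arbitrarily small residue rather than the required $t_a$. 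Closing this $``>0$ versus $\ge 1$'' gap is exactly what forces the known proof to use the lone-divider machinery, where the accounting charges each departing bundle at strictly less than the threshold (because a remaining agent deemed it unacceptable) and maintains a sharper invariant on the surviving agents' restricted maximin shares. Without that, or an equally sharp replacement, the small-items case --- and hence the lemma --- is not established.
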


We start by proving a general upper bound of $O(\log[(n_1+1) \cdots(n_g+1)])$, which already implies the first part of \Cref{thm:ub-main}. 
The high-level idea is to allocate (roughly) half of the items randomly. 
We can show that this partial allocation already satisfies the desired MMS guarantee for most of the agents. 
For the remaining agents, we use the algorithm from \Cref{thm:ub-indiv} to individually assign to them the remaining (roughly half) of the items.

\begin{theorem} \label{thm:ub-1}
For any group sizes $n_1, \dots, n_g$, let $$p = 80\left(\lceil \log(n_1 + 1) \rceil  + \dots + \lceil \log(n_g + 1)\rceil  \right).$$ 
Then, an $\mms{p}$ allocation always exists. Furthermore, there is a randomized polynomial-time algorithm that finds such an allocation with probability at least $2/3$.
\end{theorem}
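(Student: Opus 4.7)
The plan is to use a two-stage randomized scheme following the informal sketch preceding the theorem. Write $L_i := \lceil \log(n_i + 1) \rceil$, so $p = 80 \sum_i L_i$, and use \Cref{thm:mms-approx} to compute $t_a \in [\tfrac34 \mms{p}_a, \mms{p}_a]$ for each agent. After rescaling each agent $a$'s utility by $1/t_a$, it suffices to give each agent utility at least $\mms{p}_a \leq \tfrac43$ in scaled units; the constant $80$ is chosen large enough to absorb this $\tfrac43$ loss. To apply the multiplicative Chernoff bounds in \eqref{eq:chernoff-lb} cleanly, I cap item values by setting $v'_{a,\ell} := \min(v_{a,\ell}, \tfrac43)$: the partition witnessing $\mms{p}_a \geq 1$ still has $v'$-value $\geq 1$ in every part, so $u'_a(M) \geq p$, and any bundle of $v'$-value $\geq \tfrac43$ gives the agent true utility $\geq \tfrac43 \geq \mms{p}_a$.

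Stage 1 (random partial allocation): independently for each item $\ell$, with probability $\tfrac12$ put $\ell$ into a reserve $M_2$; otherwise, with probability $q_i/2$ assign $\ell$ to group $i$, where $q_i := L_i / \sum_j L_j$. Writing $A_i^{(1)}$ for the Stage~1 bundle of group $i$ and $Z_a := u'_a(A_i^{(1)})$ for agent $a = (i,j)$, we have $\E[Z_a] = (q_i/2)\, u'_a(M) \geq 40 L_i$, so \eqref{eq:chernoff-lb} (after rescaling each summand into $[0,1]$) yields $\Pr[Z_a < \tfrac43] \leq \exp(-\Theta(L_i)) \leq (n_i+1)^{-c}$ for a sufficiently large constant $c$. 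Call such $a$ \emph{bad}. Then $\E[|\text{bad}|] \leq \sum_i n_i (n_i+1)^{-c}$ is tiny, and since $\sum_j L_j \geq g$, Markov's inequality gives $|\text{bad}| \leq 5 \sum_j L_j = p/16$ with high probability.

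Stage 2 (top-up from $M_2$): a second application of \eqref{eq:chernoff-lb} with a small constant $\delta$ gives $\Pr[u'_a(M_2) < p/3] \leq \exp(-\Theta(\sum_j L_j))$; combined with the elementary bound $\sum_j \log_2(n_j+1) \geq \log_2(n+1)$, a union bound over the $n$ agents keeps the combined failure probability of Stages 1 and 2 below $1/3$. On the joint good event, $k := |\text{bad}|$ satisfies $2k - 2 \leq p/8$, and \Cref{lem:greedy-mms-lb} applied to $u'_a$ on $M_2$ gives $\mms{p/8}_a(M_2) \geq \tfrac43 \geq \mms{p}_a$, hence also $\mms{2k-2}_a(M_2) \geq \mms{p}_a$ by monotonicity of MMS in the partition parameter. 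I then invoke \Cref{thm:ub-indiv} on $M_2$, treating the bad agents as $k$ singleton groups, to obtain disjoint bundles $B_a \subseteq M_2$ each of $u_a$-value at least $\mms{p}_a$. Assembling the final allocation---group $i$ gets the union of $A_i^{(1)}$, the $B_a$ for bad $a \in N_i$, and any leftover items of $M_2$ (appended arbitrarily)---bad agents are satisfied by $B_a$ and the rest by $A_i^{(1)}$.

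The key subtleties are (i) weighting the group probabilities by $q_i \propto L_i$, so that each agent's Chernoff exponent scales with the logarithm of her \emph{own} group's size rather than uniformly in $g$ (otherwise $\E[|\text{bad}|]$ blows up when $g$ is large), and (ii) balancing the reserve fraction (here $\tfrac12$) so that $M_2$ is simultaneously thick enough for \Cref{lem:greedy-mms-lb} to cover all bad agents via \Cref{thm:ub-indiv} and thin enough that Stage 1 already meets the target for most agents. Polynomial running time follows from \Cref{thm:mms-approx}, \Cref{thm:ub-indiv}, and $O(m)$ independent coin flips per stage.
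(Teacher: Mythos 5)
Your proposal is correct and follows essentially the same route as the paper's proof: normalize via \Cref{thm:mms-approx}, cap item values, randomly assign items with group~$i$ receiving probability proportional to $\lceil\log(n_i+1)\rceil$ while reserving half the mass, bound the number of unsatisfied agents via Chernoff plus Markov, verify the reserve is valuable to everyone via Chernoff plus a union bound, and finish with \Cref{lem:greedy-mms-lb} and \Cref{thm:ub-indiv}. The only differences are cosmetic choices of normalization (target $4/3$ versus $1$) and of the threshold on the number of bad agents ($5\sum_j\lceil\log(n_j+1)\rceil$ versus $2g$), both of which work.
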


\begin{proof}
First, we run the algorithm from \Cref{thm:mms-approx} to find a value $t_a$ for each $a \in N$ such that $\frac{3}{4}\cdot \mms{p}_a(M) \leq t_a \leq \mms{p}_a(M)$. 
By scaling all utilities of $a$ by a factor of $3/(4t_a)$, we may assume henceforth that $\frac{3}{4} \leq \mms{p}_a(M) \leq 1$. 
Furthermore, we can assume that each agent's utility for each item is at most $1$.
Indeed, otherwise we may decrease it to $1$, which does not change the $\mms{p}$ value nor whether a bundle satisfies $\mms{p}$ for the agent.
Our algorithm for finding the desired allocation $(A_1, \dots, A_g)$ works as follows.

\begin{itemize}
\item Let $q_1 = \frac{40\lceil \log(n_1 + 1)\rceil}{p}, \dots, q_g = \frac{40\lceil \log(n_g + 1)\rceil}{p}$, and $q = 1 - q_1 - \cdots - q_g = \frac{1}{2}$.
\item Construct $\tA_1, \dots, \tA_g, \tA$ by assigning each item from $M$ to one of these sets independently with probabilities $q_1, \dots, q_g, q$, respectively.
\item Let $N' = \bigcup_{j \in [g]} \{a \in N_j \mid u_a(\tA_j) < 1\}$.
\item Check whether the following two conditions hold:
\begin{align} \label{eq:small-set}
|N'| \leq 2g
\end{align}
and
\begin{align} \label{eq:remaining-value-large}
u_a(\tA) \geq 8g\quad &\forall a \in N'.
\end{align}
If at least one condition does not hold, the algorithm terminates with failure.
\item Run the algorithm from \Cref{thm:ub-indiv} on the agents $N'$ (with $t_a = 1$ for all $a \in N'$) and items $\tA$. Let the output be $(\tA^a)_{a \in N'}$.
\item Finally, output $(A_1, \dots, A_g)$ defined by $$A_j := \tA_j \cup \bigcup_{a \in N_j \cap N'} \tA^a.$$
\end{itemize}

Let us first argue that if the algorithm does not fail, then the output $\mathcal{A} = (A_1, \dots, A_g)$ indeed satisfies $\mms{p}$. 
By definition of $N'$ and our assumption that $\mms{p}_a(M) \leq 1$ for all agents $a\in N$, this is already satisfied for all agents in $N\setminus N'$. 
For the agents in $N'$, note that \eqref{eq:small-set} implies that $ 2|N'| \leq 4g$. 
Furthermore, \Cref{lem:greedy-mms-lb} together with \eqref{eq:remaining-value-large} implies that $\mms{4g}_{a}(\tA) \geq 1$ for all $a \in N'$. 
Thus, the guarantee of \Cref{thm:ub-indiv} ensures that these agents also receive a bundle with utility at least $1$ in $\mathcal{A}$. 
It follows that $\mathcal{A}$ satisfies $\mms{p}$.

Next, we show that the algorithm fails with probability at most $1/3$. 
To this end, we first bound the probability that \eqref{eq:small-set} fails. 
Consider any agent $a \in N_j$ for some $j\in[g]$. 
For $\ell\in M$, let $X_\ell$ denote the random variable $\ind[\ell \in \tA_j] \cdot u_a(\ell)$. 
Note that $X := \sum_{\ell \in M} X_\ell$ is exactly equal to $u_a(\tA_j)$. 
Furthermore, we have $\E[X] = q_j \cdot u_a(M) \geq q_j \cdot p \cdot \mms{p}_a(M) \geq q_j \cdot p \cdot \frac{3}{4} = 30 \lceil \log(n_j + 1)\rceil$.
Thus, applying \Cref{thm:chernoff} (specifically, \eqref{eq:chernoff-lb}), we get
\begin{align*}
\Pr[X < 1] &\leq \Pr[X \leq 0.1 \E[X]] \\
&\leq \exp(-0.4 \E[X]) \\
&\leq \exp(-12 \lceil \log(n_j + 1)\rceil) 
\leq \frac{1}{10 n_j}.
\end{align*}
The above inequality is equivalent to
\begin{align*}
\Pr[a \in N'] \leq \frac{1}{10 n_j}.
\end{align*}
As a result, we have
\begin{align*}
\E[|N'|] = \sum_{j \in [g]} \sum_{a \in N_j} \Pr[a \in N'] \leq \sum_{j \in [g]} \sum_{a \in N_j} \frac{1}{10 n_j} = \frac{g}{10}.
\end{align*}
Using Markov's inequality, we get
\begin{align} \label{eq:small-set-pr}
\Pr[|N'| > 2g] \leq \frac{1}{20}.
\end{align}

Next, we bound the probability that \eqref{eq:remaining-value-large} fails. 
Consider any agent $a \in N_j$ for some $j\in[g]$. 
For $\ell \in M$, let $\tX_\ell$ denote the random variable $\ind[\ell \in \tA] \cdot u_a(\ell)$. 
Note that $\tX := \sum_{\ell \in M} \tX_\ell$ is exactly equal to $u_a(\tA)$. 
Furthermore, we have $\E[\tX] = q\cdot u_a(M) \ge q \cdot p \cdot \mms{p}_a(M) \geq q \cdot p \cdot \frac{3}{4} = \frac{3}{8} p \geq 30 g$. 
Thus, applying \Cref{thm:chernoff} (specifically, \eqref{eq:chernoff-lb}), we get
\begin{align*}
\Pr[\tX < 8g] 
&\leq \Pr[\tX \leq 0.3 \E[\tX]] \\
&\leq \exp(-0.24 \E[\tX]) \\
&\leq \exp(-0.09 p) \\
&\leq \exp(-7.2\log[(n_1+1) \cdots (n_g+1)]) \\
&\leq \frac{1}{10(n_1+1) \cdots (n_g+1)} \\
&\leq \frac{1}{10(n_1 + \cdots + n_g)}.
\end{align*}
The above inequality is equivalent to
\begin{align*}
\Pr[u_a(\tA) < 8g] \leq \frac{1}{10(n_1 + \cdots + n_g)}.
\end{align*}
Using the union bound over all agents, we get
\begin{align} \label{eq:remaining-value-large-pr}
\Pr[\exists a \in N, u_a(\tA) < 8g] \leq \frac{1}{10}.
\end{align}

Finally, by combining \eqref{eq:small-set-pr} and \eqref{eq:remaining-value-large-pr} via the union bound, we conclude that \eqref{eq:small-set} and \eqref{eq:remaining-value-large} hold simultaneously with probability at least $1 - 1/20 - 1/10 > 2/3$.
In other words, the algorithm succeeds with probability at least $2/3$, as desired.
\end{proof}

To prove the second part of \Cref{thm:ub-main}, we again use random assignment in the first stage; however, this time we assign almost all of the items to the first group.
We then use the algorithm from \Cref{thm:ub-1} to assign the leftover items to the remaining groups.
By appealing to the stronger concentration for the upper tail bound when the required value is much larger than its expectation (i.e., \eqref{eq:chernoff-ub} in \Cref{thm:chernoff}), we arrive at the improved bound stated in the following theorem.
Note that this bound implies the desired bound in \Cref{thm:ub-main} due to the assumption that $\beta_1 > 1000$.
(Recall the definition of $\beta_1$ from \Cref{sec:prelim}.)

\begin{theorem} \label{thm:ub-2}
Let $n_1, \dots, n_g$ be group sizes such that $\beta_1 > 1000$, and let $$p = 320000\left(\left\lceil\frac{\log(n_1 + 1)}{ \log\ \beta_1}\right\rceil + \sum_{i=2}^g\lceil \log(n_i + 1) \rceil  \right).$$
Then, an $\mms{p}$ allocation always exists. 
Furthermore, there is a randomized polynomial-time algorithm that finds such an allocation with probability at least $2/3$.
\end{theorem}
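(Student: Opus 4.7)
The plan is to mirror \Cref{thm:ub-1}'s proof template, modifying only the random-assignment probabilities to exploit the unbalancedness of this regime. I first use \Cref{thm:mms-approx} to approximate each $\mms{p}_a(M)$, rescale so $\mms{p}_a(M) \in [3/4, 1]$, and cap per-item values at~$1$. I then randomly assign each item to one of $\tA_1, \dots, \tA_g, \tA$ with probabilities $q_1, \dots, q_g, q$. For $i \ge 2$, I keep $q_i := 40 \lceil \log(n_i + 1) \rceil / p$ as in \Cref{thm:ub-1}; I choose $q$ to be a small constant strictly less than $1/e$ (say $q = 1/(2e)$, with a mild further adjustment to $\beta_1^{-\Theta(1)}$ if $\log \beta_1$ exceeds the implicit constants), and I set $q_1 := 1 - q - \sum_{i \ge 2} q_i$. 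Since $\sum_{i=2}^g q_i \le 40 \sum_{i=2}^g \lceil \log(n_i+1)\rceil / p \le 1/8000$, we have $q_1 \approx 1 - q$, so group~$1$'s assignment dominates. The rest of the algorithm---defining the failed-agent set $N'$, checking the analogues of \eqref{eq:small-set} and \eqref{eq:remaining-value-large}, and invoking \Cref{thm:ub-indiv} to allocate $\tA$ individually among the agents in $N'$---is identical to \Cref{thm:ub-1}.

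For each agent $a \in N_i$ with $i \ge 2$, the analysis is literally that of \Cref{thm:ub-1}: $\E[u_a(\tA_i)] = q_i u_a(M) \ge 30 \lceil \log(n_i+1)\rceil$, so the lower-tail Chernoff bound \eqref{eq:chernoff-lb} yields $\Pr[a \in N'] \le 1/(10 n_i)$. Similarly, a lower-tail bound on $u_a(\tA)$ combined with a union bound over all $n$ agents ensures that $u_a(\tA) \ge 8g$ for every $a$ with probability at least $9/10$; this works because $\E[u_a(\tA)] = q u_a(M) \ge 3qp/4$ is much larger than both $8g$ and $\log n$, thanks to the definition of $p$ and the hypothesis $\beta_1 > 1000$.

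The genuinely new ingredient is the analysis for agents $a \in N_1$. A direct lower-tail bound on $u_a(\tA_1)$ cannot succeed: $\E[u_a(\tA_1)] \le u_a(M) = O(p)$, whereas $p$ is much smaller than $\log n_1$ in this regime, so Chernoff cannot rule out $u_a(\tA_1) < 1$ with probability as small as $1/n_1$. Instead, I write $u_a(\tA_1) = u_a(M) - Y$, where $Y := \sum_{\ell \notin \tA_1} u_a(\ell)$ is a sum of independent $[0,1]$ random variables with $\E[Y] = (1 - q_1) u_a(M) \approx q \cdot u_a(M)$, and apply the upper-tail Chernoff bound \eqref{eq:chernoff-ub} with deviation parameter $1 + \delta = (u_a(M) - 1)/\E[Y] \approx 1/q$. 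A short calculation yields
\[
\Pr[a \in N'] \;=\; \Pr[Y > u_a(M) - 1] \;\le\; (eq)^{u_a(M)(1 - o(1))}.
\]
Since $u_a(M) \ge 3p/4 = \Omega(\log(n_1+1)/\log \beta_1)$ and the choice of $q$ ensures $\log(1/(eq)) = \Omega(\log \beta_1)$, the exponent is $\Omega(\log(n_1+1))$, giving $\Pr[a \in N'] \le 1/(10 n_1)$ and contributing at most $1/10$ to $\E[|N'|]$.

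Summing as in \Cref{thm:ub-1} gives $\E[|N'|] \le g/10$, so Markov's inequality yields $\Pr[|N'| > 2g] \le 1/20$; combining with the tail bound on $u_a(\tA)$, the two conditions hold simultaneously with probability at least $2/3$, and correctness of the resulting allocation then follows from \Cref{lem:greedy-mms-lb} and \Cref{thm:ub-indiv} exactly as in \Cref{thm:ub-1}. The main obstacle is calibrating $q$: it must be small enough that $\log(1/(eq)) = \Omega(\log \beta_1)$, so the upper-tail argument can overcome the union bound over the $n_1$ group-$1$ agents, yet large enough that $u_a(\tA)$ still concentrates well above $8g$ for every agent in every group. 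The hypothesis $\beta_1 > 1000$, together with the constant $320000$ in the definition of $p$, is precisely what makes both requirements simultaneously achievable.
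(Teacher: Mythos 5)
Your key new ingredient for the agents in $N_1$---bounding $\Pr[u_a(\tA_1)<1]$ via the upper-tail Chernoff bound \eqref{eq:chernoff-ub} applied to the utility mass that \emph{leaves} $\tA_1$---is exactly the paper's central idea, and your treatment of groups $2,\dots,g$ (reusing the $q_i$'s and the $N'$-patching mechanism of \Cref{thm:ub-1}) is a legitimate structural variant of the paper's choice (which instead uses only two buckets $\tA_1,\tA$ and recursively invokes \Cref{thm:ub-1} on groups $2,\dots,g$ with the items $\tA$). However, there is a genuine gap in your calibration of $q$, at precisely the point you flag as ``the main obstacle.'' Your plan requires $u_a(\tA)\ge 8g$ for \emph{every} agent via a union bound over all $n\approx n_1$ agents, which needs $\E[u_a(\tA)]=q\,u_a(M)=\Theta(qp)=\Theta\bigl(q\log(n_1+1)/\log\beta_1\bigr)$ to be $\Omega(\log n_1)$, i.e., $q\ge c_1\log\beta_1$ for a small constant $c_1$. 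Simultaneously, your group-$1$ bound $(eq)^{u_a(M)(1-o(1))}\le 1/(10n_1)$ with $u_a(M)=\Theta(\log(n_1+1)/\log\beta_1)$ needs $\log(1/(eq))\ge c_2\log\beta_1$, i.e., $q\le \beta_1^{-c_2}/e$. The constants $c_1,c_2$ are tiny (of order $1/320000$), but they are fixed, so for $\beta_1$ beyond some finite threshold the two requirements $q\ge c_1\log\beta_1$ and $q\le\beta_1^{-c_2}/e$ are contradictory; your assertion that ``$3qp/4$ is much larger than $\log n$'' is exactly the step that fails there. Since the theorem must hold for all group sizes with $\beta_1>1000$, the argument as written cannot be completed.

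The fix is to decouple the two roles of $\tA$. The paper never serves group-$1$ agents from $\tA$: it declares failure outright if any $a\in N_1$ has $u_a(\tA_1)<1$ (an event of probability at most $0.1$ by the upper-tail bound, since each agent fails with probability at most $1/(10n_1)$), and it only demands $u_a(\tA)\ge qp/2$---a threshold that \emph{scales with} $q$---for agents in $N_2\cup\dots\cup N_g$, union-bounding over only $n_2+\dots+n_g$ agents; Lemma \ref{lem:greedy-mms-lb} then gives $\mms{qp/4}_a(\tA)\ge 1$ and \Cref{thm:ub-1} (with parameter $qp/4=80\sum_{i\ge2}\lceil\log(n_i+1)\rceil$) finishes. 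This removes any absolute lower bound on $q$, so $q=\Theta(\log\beta_1/\beta_1)$ works for all $\beta_1$. Within your own architecture, the minimal repair is to stop requiring $u_a(\tA)\ge 8g$ for group-$1$ agents via a union bound: instead bound $\Pr[N'\cap N_1\ne\emptyset]\le\E[|N'\cap N_1|]\le 1/10$ and fail on that event, which for group $1$ collapses your scheme to the paper's.
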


\begin{proof}
Similarly to the proof of \Cref{thm:ub-1}, we may assume for every agent $a\in N$ that $\mms{p}_a(M) \in [3/4, 1]$ and that $u_a(\ell) \leq 1$ for every item $\ell \in M$.
Our algorithm for finding the desired allocation $(A_1, \dots, A_g)$ works as follows.
\begin{itemize}
\item Let $q = \frac{320\sum_{i=2}^g\lceil \log(n_i + 1) \rceil }{p}$ and $q_1 = 1 - q$.
\item Construct $\tA_1, \tA$ by assigning each item from $M$ to one of these sets independently with probabilities $q_1, q$, respectively.
\item Check whether the following two conditions hold:
\begin{align} \label{eq:first-group-mms}
u_a(\tA_1) \geq 1 \quad &\forall a \in N_1
\end{align}
and
\begin{align} \label{eq:remaining-groups-mms}
u_a(\tA) \geq qp/2 \quad &\forall a \in N_2 \cup \cdots \cup N_g.
\end{align}
If at least one condition does not hold, the algorithm terminates with failure.
\item Run the algorithm from \Cref{thm:ub-1} ten times on the $g - 1$ groups $N_2, \dots, N_g$ and items $\tA$. 
If any of the runs succeeds, let $(\tA_2, \dots, \tA_g)$ be the output of a successful run. 
Otherwise, the algorithm terminates with failure.
\item Finally, output $(\tA_1, \dots, \tA_g)$.
\end{itemize}
Observe that \eqref{eq:first-group-mms} implies that every agent in $N_1$ receives at least her $\mms{p}(M)$. 
Furthermore, \Cref{lem:greedy-mms-lb} together with \eqref{eq:remaining-groups-mms} ensures that the $\mms{qp/4}(\tA)$ value of every agent in $N_2 \cup \cdots \cup N_g$ is at least $1$. 
As a result, if the algorithm succeeds, then its output satisfies $\mms{p}$.

Next, we argue that the algorithm succeeds with probability at least $2/3$.
Since we run the algorithm from \Cref{thm:ub-1} ten times, with probability at least $1 - (1/3)^{10}$, at least one run succeeds. 
As such, it suffices to show that with probability at least $2/3 + (1/3)^{10} < 0.7$, both \eqref{eq:first-group-mms} and \eqref{eq:remaining-groups-mms} hold.

We first bound the probability that \eqref{eq:first-group-mms} holds. 
Consider any agent $a \in N_1$.
For $\ell \in M$, let $X_\ell$ denote the random variable $\ind[\ell \in \tA] \cdot u_a(\ell)$.
Note that $X := \sum_{\ell \in M} X_\ell$ is exactly equal to $u_a(\tA) = u_a(M) - u_a(\tA_1) \geq 3p/4 - u_a(\tA_1)$, where the inequality follows from the assumption that $\mms{p}_a(M) \ge 3/4$. 
Thus, applying \Cref{thm:chernoff} (specifically, \eqref{eq:chernoff-ub}) with $\delta = \frac{3p/4 - 1}{\E[X]} - 1$, we get\footnote{A later sequence of arguments we make in this proof implies that $\frac{e\cdot\E[X]}{3p/4 - 1} < 1$, so we have $\delta > 0$.}
\begin{align*}
\Pr\left[X \geq \frac{3p}{4} - 1\right] &= \Pr[X \geq (1 + \delta)\E[X]] \\
&\leq \left(\frac{e^\delta}{(1 + \delta)^{1+\delta}}\right)^{\E[X]} \\
&\leq \left(\frac{e}{1 + \delta}\right)^{(1 + \delta)\E[X]} \\
&= \left(\frac{e \cdot \E[X]}{\frac{3p}{4}-1}\right)^{\frac{3p}{4}-1}.
\end{align*}
Since $p \ge 320000\log(n_1+1)/\log\beta_1$, we have 
\[
q\le 640\sum_{i=2}^g\frac{\log(n_i+1)}{p} \le \frac{1}{500}\log\beta_1\cdot\frac{1}{\beta_1}.
\]
Observe that $\E[X] = q \cdot u_a(M)$. 
Since $\mms{p}_a(M) \leq 1$, \Cref{lem:greedy-mms-lb} implies\footnote{Otherwise, if $u_a(M) \geq 4p$, then $\mms{p}_a(M) \geq 2 \cdot \mms{2p}_a(M) \geq 2$, a contradiction.} that $u_a(M) < 4p$. 
Combining the previous two sentences, we get $\E[X] \leq 4qp$. 
Plugging this into the expression above, we thus have
\begin{align*}
\Pr\left[X \geq \frac{3p}{4} - 1\right] &\leq \left(\frac{4e q p}{\frac{3p}{4}-1}\right)^{\frac{3p}{4}-1} \\
&\leq \left(20q\right)^{\frac{3p}{4}-1} \\
&\leq \left(\frac{1}{25} \log\beta_1 \cdot \frac{1}{\beta_1}\right)^{\frac{3p}{4}-1} \\
&\overset{(\star)}{\leq} \left(\frac{1}{\beta_1}\right)^{\frac{1}{2}\left({\frac{3p}{4}-1}\right)} \\
&\leq \left(\frac{1}{\beta_1}\right)^{10 \cdot \frac{\log(n_1 + 1)}{\log\beta_1}} \\
&= \frac{1}{(n_1 + 1)^{10}} 
\leq \frac{1}{10 n_1},
\end{align*}
where for $(\star)$ we use the fact that $\frac{1}{25}\cdot \frac{\log x}{x} \leq \frac{1}{\sqrt{x}}$ for all $x \geq 1$.
The above inequality implies that $\Pr[u_a(\tA_1) < 1] \leq \frac{1}{10n_1}$. 
Taking the union bound over all $a \in N_1$, we get
\begin{align} \label{eq:first-group-mms-pr}
\Pr[\exists a \in N_1, u_a(\tA_1) < 1] \leq 0.1.
\end{align}

Next, we bound the probability that \eqref{eq:remaining-groups-mms} holds. 
Consider any agent $a \in N_2 \cup \cdots \cup N_g$. 
For $\ell \in M$, let $\tX_\ell$ denote the random variable $\ind[\ell \in \tA] \cdot u_a(\ell)$. 
Note that $\tX := \sum_{\ell \in M} \tX_\ell$ is exactly equal to $u_a(\tA)$. 
Furthermore, we have $\E[\tX] = q \cdot u_a(M) \geq q \cdot p \cdot \mms{p}_a(M) \geq \frac{3}{4}qp$. 
Thus, applying \Cref{thm:chernoff} (specifically, \eqref{eq:chernoff-lb}), we get
\begin{align*}
\Pr\left[\tX < \frac{qp}{2}\right] &\leq \Pr\left[\tX \leq \frac{2}{3} \E[\tX]\right] \\
&\leq \exp\left(-\frac{\E[\tX]}{18}\right) \\
&\leq \exp\left(-\frac{qp}{24}\right) \\
&\leq \exp(-10\log[(n_2+1) \cdots (n_g+1)]) \\
&\leq \frac{1}{10(n_2+1) \cdots (n_g+1)} \\
&\leq \frac{1}{10(n_2 + \cdots + n_g)}.
\end{align*}
The above inequality is equivalent to 
\begin{align*}
\Pr\left[u_a(\tA) < \frac{qp}{2}\right] \leq \frac{1}{10(n_2 + \cdots + n_g)}.
\end{align*}
Using the union bound over all agents $a \in N_2 \cup \cdots \cup N_g$, we get
\begin{align} \label{eq:remaining-groups-mms-pr}
\Pr\left[\exists a \in N_2 \cup \cdots \cup N_g, u_a(\tA) < \frac{qp}{2}\right] \leq 0.1.
\end{align}

Finally, by combining \eqref{eq:first-group-mms-pr} and \eqref{eq:remaining-groups-mms-pr} via the union bound, we conclude that \eqref{eq:first-group-mms} and \eqref{eq:remaining-groups-mms} hold simultaneously with probability at least $1 - 0.1 - 0.1 > 0.7$. 
In other words, the algorithm succeeds with probability at least $2/3$, as desired.    
\end{proof}

\section{Lower Bounds}
\label{sec:lower}

In this section, we turn our attention to lower bounds and prove \Cref{thm:lb-main}.
We first present a generic construction based on covering designs, and later apply this construction with more specific parameters to derive our theorem.

\begin{lemma} \label{thm:generic-lb}
Let $m, p, t_1, \dots, t_g, n_1, \dots, n_g$ be positive integers such that
\begin{enumerate}
\item $t_1 + \dots + t_g + g > m$;
\item For each $i \in [g]$, at least one of the following holds:
\begin{itemize}
\item $t_i \leq \lfloor m / p \rfloor - 1$;
\item $m - (p - 1) > t_i$ and $C(m, m - (p - 1), t_i) \leq n_i$.
\end{itemize}
\end{enumerate}
Then, $\minshare(n_1, \dots, n_g) > p$.
\end{lemma}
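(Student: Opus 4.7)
The plan is to construct an explicit instance on $m$ items witnessing that no $\mms{p}$ allocation exists. The first condition in the lemma is a pigeonhole observation: for any allocation $(A_1, \dots, A_g)$, we cannot have $|A_i| \geq t_i + 1$ for every $i \in [g]$, since this would give $m = \sum_i |A_i| \geq \sum_i (t_i + 1) > m$. Hence some ``deficient'' group $i^*$ receives a bundle with $|A_{i^*}| \leq t_{i^*}$. My task is to design the utilities of each group so that, no matter which group ends up deficient, some agent in that group has utility strictly below her $1$-out-of-$p$ MMS for the received bundle.

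For each group $i$ I would use the sub-case of condition 2 that holds. If $t_i \leq \lfloor m/p \rfloor - 1$, give every agent in $N_i$ the uniform utility of $1$ per item. Then each agent's $\mms{p}$ equals $\lfloor m/p \rfloor$, whereas a bundle of size at most $t_i$ has utility at most $t_i \leq \lfloor m/p \rfloor - 1$, so every agent in this group is dissatisfied whenever the group is deficient.

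The interesting sub-case uses the covering design. Fix an $(m, m-(p-1), t_i)$-covering design $\{S_1,\dots,S_r\}$ with $r \leq n_i$, and assign to agents $a_1,\dots,a_r$ in $N_i$ the utility that values each item of $\bar{S}_j := M \setminus S_j$ (a set of size $p-1$) at $m - p + 1$ and each item of $S_j$ at $1$; any remaining agents in $N_i$ can copy $a_1$. A direct computation gives $\mms{p}_{a_j} = m - p + 1$: the partition that isolates each heavy item in its own bundle and lumps every light item into the last bundle produces $p$ bundles of value exactly $m - p + 1$, while averaging against the total value $p(m-p+1)$ yields the matching upper bound.

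Putting the pieces together: take any allocation and let $i^*$ be a deficient group. In the uniform sub-case we are already done. In the covering sub-case, the covering property supplies some $j$ with $A_{i^*} \subseteq S_j$, so $A_{i^*}$ contains no heavy items of agent $a_j$, giving $u_{a_j}(A_{i^*}) = |A_{i^*}| \leq t_{i^*} < m - p + 1 = \mms{p}_{a_j}$, where the strict inequality uses the assumption $m - (p-1) > t_i$. Thus agent $a_j$ is dissatisfied, so no $\mms{p}$ allocation exists. The main delicate point is tuning the utility in the covering sub-case: one wants the MMS value and the ``deficient-bundle'' bound to both land on $m - p + 1$, and the two-valued utility with weights $(m-p+1, 1)$ on set sizes $(p-1, m-p+1)$ is the natural choice that makes these match.
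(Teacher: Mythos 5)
Your proof is correct and follows essentially the same approach as the paper: a pigeonhole argument forcing some group to receive at most $t_i$ items, combined with uniform utilities in the first sub-case and a covering-design construction in the second. Your two-valued utilities (weights $m-p+1$ and $1$) are exactly the paper's construction (weights $1$ and $\tfrac{1}{m-(p-1)}$) rescaled by a factor of $m-(p-1)$, so the MMS value and the deficient-bundle bound coincide in the same way.
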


\begin{proof}
For each $i\in[g]$, we define the utilities of the agents in the $i$-th group $N_i$ for the $m$ items as follows.
\begin{itemize}
\item If $t_i \leq \lfloor m / p \rfloor - 1$, let each item have utility $1$ for every agent in $N_i$.
\item If $m - (p - 1) > t_i$ and $C(m, m - (p - 1), t_i) \leq n_i$, then let $\{S_{i, 1}, \dots, S_{i, n_i}\}$ be any $(m, m - (p - 1), t_i)$-covering design. 
For each $j\in[n_i]$, we define the utility function of agent $(i,j)$ by
\begin{align*}
u_{i, j}(\ell) =
\begin{cases}
\frac{1}{m - (p - 1)} &\text{ if } \ell \in S_{i, j}; \\
1 &\text{ otherwise}.
\end{cases}
\end{align*}
\end{itemize}
If both conditions are met, we pick one arbitrarily.

Suppose for contradiction that an $\mms{p}$ allocation exists.
We claim that each group $N_i$ must receive at least $t_i + 1$ items.
In the first case, group~$i$ must receive at least $\lfloor m/p\rfloor \ge t_i + 1$ items.
Consider the second case.
Each agent $(i,j)$'s $\mms{p}$ is~$1$, by putting all items from $S_{i,j}$ in one bundle and each remaining item in its own bundle.
Now, for any set $T$ of at most $t_i$ items, there exists $j$ such that $T\subseteq S_{i,j}$, and so $u_{i,j}(T) \le \frac{t_i}{m-(p-1)} < 1$.
Hence, group~$i$ must again receive at least $t_i + 1$ items in this case.
The total number of items that need to be allocated is therefore at least $\sum_{i\in [g]}(t_i + 1) = t_1 + \dots + t_g + g > m$, a contradiction.
It follows that no $\mms{p}$ allocation exists in this instance, and so $\minshare(n_1, \dots, n_g) > p$.
\end{proof}

We now apply \Cref{thm:generic-lb} using various parameter settings, starting with the balanced case.

\begin{theorem} \label{thm:lb-1}
Let $n_1\ge \dots\ge n_g\ge 63$ be positive integers such that $\lfloor \frac{1}{6} \log(n_1 + 1) \rfloor \leq \sum_{i=2}^g\left\lfloor \frac{1}{6} \log(n_i + 1) \right\rfloor$. 
Then, for $$p = \sum_{i=1}^g\left\lfloor \frac{1}{6} \log(n_i + 1) \right\rfloor,$$ we have $\minshare(n_1, \dots, n_g) > p$.
\end{theorem}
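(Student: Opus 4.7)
The plan is to invoke Lemma~\ref{thm:generic-lb} with a tailored choice of parameters. Write $T_i := \lfloor \tfrac{1}{6}\log(n_i + 1)\rfloor$, so that $p = T_1 + \cdots + T_g$. Each $T_i \geq 1$ because $n_i \geq 63$ forces $n_i + 1 \geq 64$, and the balance hypothesis $T_1 \leq T_2 + \cdots + T_g$ yields $p \geq 2T_1 \geq 2T_i$ for every $i$. I would set
\[
m := 2p - 1 \qquad \text{and} \qquad t_i := 2T_i - 1 \text{ for each } i \in [g].
\]
Condition~1 of Lemma~\ref{thm:generic-lb} is then immediate: $t_1 + \cdots + t_g + g = (2p - g) + g = 2p > 2p - 1 = m$.

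For Condition~2, observe that $\lfloor m/p \rfloor - 1 = 0$ (since $m < 2p$) while each $t_i \geq 1$, so branch~(a) never applies and I must verify branch~(b) for every $i$. The strict inequality $m - (p - 1) = p > 2T_i - 1 = t_i$ follows directly from $p \geq 2T_i$. The substantive step is the covering-design bound $C(2p - 1, p, 2T_i - 1) \leq n_i$, which I would establish using Lemma~\ref{thm:cov-enum} with $k = p - 1$, $t = 2T_i - 1$, and $s := \lfloor p/(2T_i - 1) \rfloor \geq 1$. The validity condition $m \geq k + st$ reduces to $p \geq s(2T_i - 1)$, which is precisely the definition of $s$. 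A brief case analysis on the residue $q := p - s(2T_i - 1) \in \{0, 1, \dots, 2T_i - 2\}$ shows that $\lceil (p - 1)/s \rceil \leq 2(2T_i - 1)$, and hence
\[
C(2p - 1, p, 2T_i - 1) \leq \binom{3(2T_i - 1)}{2T_i - 1} \leq 2^{3(2T_i - 1)} = \frac{2^{6T_i}}{8}.
\]
By definition of $T_i$, $2^{6T_i} \leq n_i + 1 \leq 8 n_i$, so the right-hand side is at most $n_i$, as required.

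The main obstacle is finding the right calibration of $m$ and the $t_i$'s: each $t_i$ must be large enough that $\sum_i t_i + g$ beats $m$, yet small enough that $C(m, m - (p - 1), t_i)$ stays below $n_i$. The choice $t_i = 2T_i - 1$ and $m = 2p - 1$ threads this needle, because Rees--Stinson--Wei applied with $s \approx p/(2T_i - 1)$ then delivers a central-binomial-type bound of order $2^{3(2T_i - 1)}$, matching the budget $2^{6T_i} \leq n_i + 1$ afforded by the definition of $T_i$ up to a constant factor. Once Conditions~1 and~2 have been verified, Lemma~\ref{thm:generic-lb} directly yields $\minshare(n_1, \dots, n_g) > p$.
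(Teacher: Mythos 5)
Your proposal is correct and follows essentially the same route as the paper's proof: both invoke Lemma~\ref{thm:generic-lb} with $m \approx 2p$ and $t_i \approx 2\lfloor\frac{1}{6}\log(n_i+1)\rfloor$, verify $m-(p-1)>t_i$ via the balance hypothesis $p \geq 2T_1$, and bound $C(m,m-(p-1),t_i)$ by $\binom{3t_i}{t_i} \leq 2^{3t_i}$ using Lemma~\ref{thm:cov-enum} with $s \approx p/t_i$. The only differences are the harmless off-by-one shifts ($m=2p-1$ and $t_i=2T_i-1$ instead of the paper's $m=2p$ and $t_i=2T_i$), and your case analysis on the residue correctly delivers $\lceil (p-1)/s\rceil \leq 2t_i$, so the argument goes through.
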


\begin{proof}
Let $t_i = 2\left\lfloor \frac{1}{6} \log(n_i + 1) \right\rfloor \ge 2$ for all $i \in [g]$. 
Note that $p = \frac{1}{2}(t_1 + \dots + t_g)$, and let $m = 2p$. 
We verify the two conditions in \Cref{thm:generic-lb}. 
For the first condition, we have $t_1 + \dots + t_g + g = m + g > m$.

It remains to check the second condition for each $i \in [g]$.
To do so, first note that
\begin{align*}
m - (p - 1) 
= p + 1 
&> \sum_{i=1}^g\left\lfloor \frac{1}{6} \log(n_i + 1) \right\rfloor \\
&\ge 2\left\lfloor \frac{1}{6} \log(n_1 + 1) \right\rfloor\\
&\ge 2\left\lfloor \frac{1}{6} \log(n_i + 1) \right\rfloor = t_i,
\end{align*}
where the second inequality follows from our assumption in the theorem.
Let $s_i = \left\lfloor\frac{p+1}{t_i}\right\rfloor \ge 1$.
We have $s_i \ge \frac{p+1}{2t_i}$ and $m = (p-1) + (p+1) \ge (p-1)+s_it_i$.
Applying \Cref{thm:cov-enum}, we get
\begin{align*}
C(m, m - (p - 1), t_i) &\leq \binom{t_i + \big\lceil \frac{p-1}{s_i} \big\rceil}{t_i} \\
&\leq \binom{t_i + \big\lceil \frac{p-1}{(p+1)/(2t_i)} \big\rceil}{t_i} \\
&\leq \binom{t_i + 2t_i}{t_i} 
< 2^{3t_i} 
\leq n_i + 1,
\end{align*}
where the last inequality is due to the definition of $t_i$.
As a result, \Cref{thm:generic-lb} implies that $\minshare(n_1, \dots, n_g) > p$.
\end{proof}

For the unbalanced case, it will be more convenient to have two separate constructions. 
First, we consider the case where the smaller groups $N_2, \dots, N_g$ are ``not too small''. 
In this case, we can still use a covering design (from \Cref{thm:cov-greedy}) for these groups.

\begin{theorem} \label{thm:lb-2}
Let $n_1 \geq n_2 \geq \cdots \geq n_g$ be positive integers such that $\beta_1 \ge 1000$ and, for each $i \in [g]$, it holds that $n_i \geq 4(\log(n_1 + 1))^4$.
Then, for $$p = \left\lfloor \frac{\log(n_1 + 1)}{10  \log\beta_1} \right\rfloor,$$ we have $\minshare(n_1, \dots, n_g) > p$.
\end{theorem}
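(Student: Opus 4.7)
The plan is to apply Lemma~\ref{thm:generic-lb} to a carefully constructed instance. I set $m := \lceil p \beta_1 \rceil$, $t_1 := m - p$ (the largest integer permitted by the condition $m - (p-1) > t_1$), and, for each $i \geq 2$, $t_i := \lceil \beta_1 \log(n_i + 1)/(10 \log\beta_1) \rceil$. The choice of $t_1$ concentrates almost all of the ``mass'' on the large first group, while the $t_i$'s for $i \geq 2$ are calibrated so that $\sum_{i \geq 2} t_i$ just barely covers the remaining~$p$.

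The first hypothesis of Lemma~\ref{thm:generic-lb} follows readily: since $\sum_{i \geq 2} \log(n_i+1) = \log(n_1+1)/\beta_1$ by definition of $\beta_1$, we get $\sum_{i \geq 2} t_i \geq \log(n_1 + 1)/(10 \log\beta_1) \geq p$, and hence $t_1 + \sum_{i \geq 2} t_i + g \geq (m - p) + p + g > m$. For the covering-design condition on group~$1$, Lemma~\ref{thm:cov-enum} with $s = 1$ and $t = m - p$ yields $C(m, m-p+1, m-p) \leq \binom{m-1}{p-1} \leq (em/(p-1))^{p-1}$. Plugging in $m \leq p\beta_1 + 1$ and $\beta_1 \geq 1000$ bounds this by $\beta_1^{O(p)}$, which is well below $n_1 \geq \beta_1^{10p}/2$ (the latter following from the definition of $p$).

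The delicate step is the covering-design condition for each $i \geq 2$, which I verify using Lemma~\ref{thm:cov-greedy}. The ratio $\binom{m}{t_i}/\binom{m-p+1}{t_i}$ equals $\prod_{j=0}^{t_i - 1}(1 + (p-1)/(m-p+1-j))$ and is at most $\exp(O(t_i/\beta_1)) = \exp(O(\log(n_i+1)/\log\beta_1))$, contributing $O(\log(n_i+1)/\log\beta_1)$ to $\log_2 C$. The remaining factor $1 + \ln\binom{m-p+1}{t_i}$ contributes only $O(\log\log n_1)$ more in base~$2$ (after using $t_i \leq 2p$ and $m \leq 2p\beta_1$, which give $em/t_i = \mathrm{poly}(\log n_1)$). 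Because $\log\beta_1 \geq \log 1000 > 9$, the first contribution has coefficient at most $1/30$ in front of $\log(n_i+1)$; combined with the hypothesis $\log(n_i+1) \geq 2 + 4\log\log n_1$ (which follows from $n_i \geq 4(\log n_1)^4$) dominating the $O(\log\log n_1)$ additive term, we obtain $\log_2 C(m, m-p+1, t_i) \leq \log(n_i+1) - 1$, i.e., $C(m, m-p+1, t_i) \leq n_i$.

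The main obstacle is the narrow budget in this last step: two competing requirements must be met on the constant inside the definition of $t_i$, namely that $\sum_{i \geq 2} t_i \geq p$ (forcing the constant to be small) while $C(m, m-p+1, t_i) \leq n_i$ for all $i \geq 2$ (forcing it to be large). The value~$10$---appearing both in the definition of $p$ and inside $t_i$---is tuned precisely to accommodate both, and the hypothesis $n_i \geq 4(\log n_1)^4$ is exactly what is needed to absorb the logarithmic error term from Lemma~\ref{thm:cov-greedy}.
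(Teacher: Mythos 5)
Your proposal is correct and takes essentially the same route as the paper: both instantiate Lemma~\ref{thm:generic-lb} with $t_1 = m - p$ and $t_i \approx \beta_i p$ for $i \ge 2$, bound group~$1$'s covering number by a binomial coefficient, and use the greedy bound of Lemma~\ref{thm:cov-greedy} for the remaining groups. The only real difference is your smaller choice $m = \lceil p\beta_1 \rceil$ versus the paper's $m \approx \beta_1^{10} p / e$, which still leaves ample slack in all the estimates (just dispose of $p \le 1$ trivially via Observation~\ref{obs:atleastgroup}, since Lemma~\ref{thm:cov-enum} requires $k = p - 1 \ge 1$).
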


\begin{proof}
The statement is trivial if $p = 0$, so assume without loss of generality that $p \ge 1$.
Let $m = \left\lfloor \frac{(n_1 + 1)^{1/p}}{e} \right\rfloor \cdot p$
and $t_1 = m - p$. 
We will show later that $(n_1 + 1)^{1/p} \ge 30$, which means that $m\ge 2p$ and $t_1 \ge 1$.
Let $t_i = \lceil \beta_i \cdot p \rceil$ for each $i \in \{2, \dots, g\}$. 
We verify the two conditions in \Cref{thm:generic-lb}. 
For the first condition, we have $t_1 + \cdots + t_g + g\geq t_1 + (\beta_2 \cdot p + \cdots + \beta_g \cdot p) + g = t_1 + p + g = m + g > m$.

It remains to check the second condition for each $i\in [g]$.
We do so separately for $i = 1$ and $i\ne 1$.
For $i = 1$, it is clear that $m - (p - 1) > t_1$. 
Also, we have
\begin{align*}
C(m, m - (p - 1), t_1) 
&\leq C(m, m - p, t_1) \\
&\leq \binom{m}{m - p} \\
&= \binom{m}{p} 
< \left(\frac{e m}{p}\right)^p 
\leq n_1 + 1,
\end{align*}
where the second inequality follows from \Cref{obs:cov-trivial} and
the last inequality from our choice of $m$.

Now, fix $i\in\{2,\dots,g\}$.
It is again clear that $m - (p-1) \ge p > t_i$.
By \Cref{thm:cov-greedy}, we have
\begin{align}
C(m, m &- (p - 1), t_i) \nonumber  \\&\leq C(m, m - p, t_i) \nonumber \\
&\leq \frac{\binom{m}{t_i}}{\binom{m-p}{t_i}} \cdot \left(1 + \ln\binom{m-p}{t_i}\right) \nonumber \\
&\leq \left(\frac{m-t_i}{m-p-t_i}\right)^{t_i} \left(1 + t_i \ln m\right) \nonumber \\
&\leq \exp\left(t_i \cdot \frac{p}{m-p-t_i}\right) \left(1 + t_i \ln m\right), \label{eq:cover-before-exp}
\end{align}
where for the last inequality we use the well-known bound $1+x \le \exp(x)$, which holds for all real numbers $x$.

We now bound each term separately.
Before bounding the first term, observe that 
\begin{align*}
(n_1 + 1)^{1/p} &\geq (n_1 + 1)^{\frac{10  \log\beta_1}{\log(n_1 + 1)}} 
= \beta_1^{10} \geq 30\beta_1,
\end{align*}
where the last inequality follows from our assumption that $\beta_1 \geq 1000$.
From our choice of $m$, this implies that
\begin{align}
m \geq 10\beta_1 \cdot p.
\label{eq:m-lb}
\end{align}
We can now bound the first term as follows:
\begin{align*}
\exp\left(t_i \cdot \frac{p}{m-p-t_i}\right) &\leq \exp\left((\beta_i p + 1) \cdot \frac{p}{m-2p}\right) \\
&\overset{\eqref{eq:m-lb}}{\leq} \exp\left(\frac{(\beta_i p + 1) p}{8\beta_1 \cdot p}\right) \\
&= \exp\left(\frac{\beta_i p + 1}{8\beta_1}\right) \\
&\leq 2 \exp\left(\frac{\beta_i p}{8\beta_1}\right) \\
&\leq 2 \exp\left(\frac{\beta_i \cdot \frac{\log(n_1 + 1)}{10  \log\beta_1}}{8\beta_1}\right) \\
&= 2\exp\left(\frac{\log(n_i + 1)}{80 \log\beta_1}\right) \\
&\leq 2\exp\left(\frac{\log(n_i + 1)}{80}\right) \\
&\leq 2\sqrt{n_i},
\end{align*}
where for the last inequality we use the fact that $n_i\ge 2$, which follows from our assumption on $n_i$.

Next, we bound the second term.
We have
\begin{align*}
1 + t_i \ln m 
&\leq 1 + p \cdot \ln((n_1 + 1)^{1/p} \cdot p) \\
&= 1 + p \cdot \left(\frac{1}{p} \ln(n_1 + 1) + \ln p \right) \\
&= 1 + \ln(n_1 + 1) + p  \ln p \\
&\leq 1 + \ln(n_1 + 1) + (\ln(n_1 + 1))^2 \\
&\leq 2(\ln(n_1 + 1))^2
\leq (\log(n_1 + 1))^2
\leq \sqrt{n_i} / 2,
\end{align*}
where the last inequality follows from our assumption on $n_i$.

Plugging the estimates of both terms into \eqref{eq:cover-before-exp}, we arrive at
\begin{align*}
C(m, m - (p - 1), t_i) \leq n_i.
\end{align*}
As a result, \Cref{thm:generic-lb} implies that $\minshare(n_1, \dots, n_g) > p$.
\end{proof}

In the next construction, we do not make any assumption on $N_2,\dots,N_g$ (i.e., they could all be of size $1$), and simply use the ``identical utilities'' construction.

\begin{theorem}\label{thm:lb-3}
Let $n_1$ be a positive integer such that $\log(n_1 + 1) \geq 1000(g - 1)$. 
Then, for $$p = \left\lfloor \frac{\log(n_1 + 1)}{10  \log\left(\frac{\log(n_1 + 1)}{g - 1}\right)} \right\rfloor,$$ we have $\minshare(n_1, 1, \dots, 1) > p$.
\end{theorem}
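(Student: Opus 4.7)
The plan is to apply \Cref{thm:generic-lb} with parameters specialized to the singleton groups $N_2, \dots, N_g$: since $n_i = 1$ for $i \geq 2$, no covering design is useful there, so we invoke the ``identical utilities'' branch of condition~2 for these groups and reserve a single covering-design construction for the large group $N_1$. Concretely, I would set $m := \lfloor (p-1)(n_1+1)^{1/(p-1)}/e \rfloor$, take $t_i := \lfloor m/p \rfloor - 1$ for each $i \in \{2, \dots, g\}$, and let $t_1 := m - (g-1)\lfloor m/p \rfloor$. (The case $p \leq 1$ is immediate from \Cref{obs:atleastgroup}, since $g \geq 2$; under the hypothesis $L \geq 1000(g-1)$ one in fact gets $p$ much larger, so assuming $p \geq 2$ in what follows is harmless.)

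The verification of the two conditions in \Cref{thm:generic-lb} splits cleanly. Condition~1 is a direct computation:
\[
t_1 + \sum_{i=2}^g t_i + g = \bigl[m - (g-1)\lfloor m/p\rfloor\bigr] + (g-1)\bigl(\lfloor m/p\rfloor - 1\bigr) + g = m + 1 > m.
\]
The ``identical utilities'' branch of condition~2 is immediate from the definition of $t_i$ for $i \geq 2$. For $i = 1$, the trivial covering-design bound of \Cref{obs:cov-trivial} combined with the inequality $\binom{m}{k} < (em/k)^k$ (strict via $k! > (k/e)^k$) gives
\[
C(m, m-(p-1), t_1) \leq \binom{m}{p-1} < \left(\frac{em}{p-1}\right)^{p-1} \leq n_1 + 1,
\]
where the last step uses the choice of $m$. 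Integrality of the covering number then yields $C(m, m-(p-1), t_1) \leq n_1$, as required.

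The main obstacle is the auxiliary requirement $m - (p-1) > t_1$, equivalently $(g-1)\lfloor m/p \rfloor \geq p$. The key estimate here is
\[
(n_1 + 1)^{1/(p-1)} \geq (n_1+1)^{1/p} = 2^{L/p} \geq \bigl(L/(g-1)\bigr)^{10},
\]
where $L := \log(n_1+1)$; the final inequality follows directly from the definition of $p$. Since $L/(g-1) \geq 1000$ by hypothesis, this forces $m/p$ to be enormously larger than $p/(g-1)$, so the required inequality holds with vast slack (and in particular $t_1 \geq 1$, so the construction is nondegenerate). Combining these checks with \Cref{thm:generic-lb} yields $\minshare(n_1, 1, \dots, 1) > p$, as claimed.
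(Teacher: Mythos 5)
Your proof is correct and takes essentially the same route as the paper's: invoke \Cref{thm:generic-lb} with the ``identical utilities'' branch for the singleton groups and the trivial covering design of \Cref{obs:cov-trivial} plus the bound $\binom{m}{k} < (em/k)^k$ for the large group, calibrating $m$ so that this count is below $n_1+1$. The paper picks the smaller value $m = \lceil p/(g-1)+1\rceil\cdot p$ rather than your $m \approx (p-1)(n_1+1)^{1/(p-1)}/e$, but this is only a bookkeeping difference; your slack claims (including $t_1 \ge 1$, which really rests on $p \ge 10(g-1)$ rather than on the displayed estimate) all check out under the hypothesis $\log(n_1+1) \ge 1000(g-1)$.
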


\begin{proof}
By \Cref{obs:atleastgroup}, the statement is trivial if $p \le g$, so assume without loss of generality that $p \ge g+1$.
Let $m = \left\lceil \frac{p}{g - 1} + 1 \right\rceil \cdot p$, and note that $m > p$.
Let $t_1 = m - p$, and $t_i = \left\lceil p/(g-1) \right\rceil$ for each $i \in \{2, \dots, g\}$.
We verify the two conditions in \Cref{thm:generic-lb}. 
For the first condition, we have $t_1 + \dots + t_g + g \ge t_1 + p + g = m + g > m$.

It remains to check the second condition for each $i \in [g]$.
We do so separately for $i=1$ and $i\ne 1$.
For $i=1$, it is clear that $m-(p-1) > t_1$.
Also, we have
\begin{align*}
C(m, m - (p - 1), t_1) 
&\leq C(m, m - p, t_1) \\
&\leq \binom{m}{m - p} \\
&= \binom{m}{p} \\
&< \left(\frac{e m}{p}\right)^p \\
&= \left(e\left\lceil \frac{p}{g - 1} + 1 \right\rceil\right)^p \\
&\leq \left(\frac{10p}{g - 1} \right)^p \\
&\leq \left(\frac{\log(n_1 + 1)}{g - 1}\right)^{\frac{\log(n_1 + 1)}{ \log\left(\frac{\log(n_1 + 1)}{g - 1}\right)}} \\
&= n_1 + 1,
\end{align*}
where the second inequality follows from \Cref{obs:cov-trivial}, and the last inequality from the definition of $p$ along with our assumption that $\log(n_1+1)\ge 1000(g-1)$.

For $i\in\{2,\dots,g\}$, we simply have $\lfloor m/p \rfloor - 1 = \left\lceil \frac{p}{g - 1} + 1 \right\rceil - 1 \geq t_i$; in fact, the inequality is an equality.
As a result, \Cref{thm:generic-lb} implies that $\minshare(n_1, \dots, n_g) > p$.
\end{proof}

By appropriately combining the previous three theorems for different regimes of parameters, we arrive at \Cref{thm:lb-main}.

\begin{proof}[Proof of \Cref{thm:lb-main}]
We consider the two cases separately.

\textbf{Case 1}: $\beta_1 \le 1000$.
If $\log[(n_1+1)\cdots(n_g+1)]\le 9000g$, then \Cref{obs:atleastgroup} immediately implies the desired bound.
Assume therefore that $\log[(n_1+1)\cdots(n_g+1)] > 9000g$.
Suppose for contradiction that $n_2 < 63$.
Then $n_2,\dots,n_g < 63$ and $\log(n_1+1) > 9000g - 6g > 8000g$.
On the other hand, $\beta_1 \le 1000$ implies that $\log(n_1+1) \le 1000\cdot(6g) = 6000g$, a contradiction.
Hence, $n_2 \ge 63$.

Let $g'\in\{2,\dots,g\}$ be the largest index such that $n_{g'} \ge 63$.
Since $\beta_1 \le 1000$ and $\log[(n_1 + 1)\cdots(n_g + 1)] > 9000g$, we have $\log[(n_2 + 1)\cdots(n_g + 1)] > 8g$.
It follows that
\begin{align*}
\log[(n_2+1)&\cdots(n_{g'}+1)]\\
&\ge \log[(n_2+1)\cdots(n_g+1)] - 6g\\
&\ge \Omega(\log[(n_2+1)\cdots(n_g+1)]) \\
&\ge \Omega(\log[(n_1+1)\cdots(n_g+1)]).
\end{align*}
Applying \Cref{thm:lb-1}, we get 
\begin{align*}
\minshare&(n_2, n_2, n_3, \dots, n_{g'}) \\
&\geq \Omega(\log[(n_2+1)(n_3+1) \cdots (n_{g'}+1)]) \\
&\geq \Omega(\log[(n_1+1)(n_2+1) \cdots (n_g+1)]).
\end{align*}
Finally, combining this with Observations~\ref{obs:dec-monotone} and \ref{obs:dec-remove-group} yields the desired bound.

\textbf{Case 2}: $\beta_1 > 1000$.
This means that $n_1 + 1 > 2^{1000}(n_2+1)\cdots(n_g+1)$, which implies that $n_1 > n_2 + \dots + n_g$ and therefore $n_1 > n/2$.
Let $\delta \in (0,1)$ be the constant in the theorem statement, i.e., $g\le (\log n)^{1-\delta}$.

\underline{Case 2.1}:
$\log[(n_2+1)\cdots(n_g+1)] \le (\log(n_1 + 1))^{1-\delta/2}$.
Since $\beta_1 > 1000$, we have $\log(n_1+1) > 1000(g-1)$.
Applying \Cref{thm:lb-3} and \Cref{obs:dec-monotone}, we get
\begin{align*}
\minshare(n_1, \dots, n_g) &\geq \Omega\left(\frac{\log(n_1+1)}{\log\left(\frac{\log(n_1+1)}{g}\right)}\right) \\
&\geq \Omega\left(\frac{\log(n_1+1)}{\log\log(n_1 + 1)}\right) \\
&\geq \Omega\left(\frac{\log(n_1+1)}{\log\left(\frac{\log(n_1+1)}{\log[(n_2+1)\cdots(n_g+1)]}\right)}\right),
\end{align*}
where the last relation holds by the assumption of Case~2.1 and since $\delta$ is constant.

\underline{Case 2.2}: $\log[(n_2+1)\cdots(n_g+1)] > (\log(n_1 + 1))^{1-\delta/2}$.
Let $\tau = \frac{ (\log(n_1 + 1))^{1 - \delta/2}}{2g}$, and let $g'\in\{2,\dots,g\}$ be the largest index such that $\log(n_{g'} + 1) \geq \tau$; note that $g$ is well-defined due to the assumption of Case~2.2. 
We have 
\begin{align*}
\log[(n_2+1)&\cdots(n_{g'}+1)]\\
&\ge \log[(n_2+1)\cdots(n_{g}+1)] - \tau g \\
&\ge \frac{1}{2}\log[(n_2+1)\cdots(n_{g}+1)].
\end{align*} 
Recall that $n_1 \geq n/2$. 
Hence, for any sufficiently large $n$, we have 
\begin{align*}
n_{g'} \ge 2^{\tau} - 1 \geq 2^{\Omega((\log n)^{\delta/2})} &\geq 4(\log n)^4\\
&\geq 4(\log(n_1+1))^4,
\end{align*}
where for the second relation we use the assumption $g\le (\log n)^{1-\delta}$.
We can thus apply \Cref{thm:lb-2} to conclude that
\begin{align*}
\minshare(n_1, \dots, n_{g'}) &\geq  \Omega\left(\frac{\log(n_1+1)}{\log\left(\frac{\log(n_1+1)}{\log[(n_2+1)\cdots(n_{g'}+1)]}\right)}\right) \\
&= \Omega\left(\frac{\log(n_1+1)}{\log\left(\frac{\log(n_1+1)}{\log[(n_2+1)\cdots(n_g+1)]}\right)}\right).
\end{align*}
Applying \Cref{obs:dec-remove-group} then yields the desired bound.
\end{proof}

\section{Non-Asymptotic Results}
\label{sec:non-asymptotic}

While our bounds on $\minshare$ are already asymptotically tight, in this section, we additionally present some concrete non-asymptotic bounds. 
Our focus will be on the fundamental case of two groups.
We begin with the upper bound.

\begin{theorem} \label{thm:ub-two-group-concrete}
Let $n_1, n_2, p$ be positive integers such that
\begin{align*}
\left(1 - \frac{1}{2^{p-1}}\right)^{n_1} + \left(1 - \frac{1}{2^{p-1}}\right)^{n_2} \geq 1.
\end{align*}
Then, we have
$
\minshare(n_1, n_2) \leq p.
$
\end{theorem}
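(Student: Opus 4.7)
The plan is to establish existence of an $\mms{p}$ allocation via a uniformly random two-coloring of the items. By scaling, I would assume $\mms{p}_a(M) = 1$ for every agent $a$, and fix an MMS-achieving partition $\pi_a = \{B_1^a, \dots, B_p^a\}$ with $u_a(B_i^a) \geq 1$ for each $i$. Let $A_1 \subseteq M$ be drawn uniformly at random (each item independently in $A_1$ with probability $1/2$) and set $A_2 := M \setminus A_1$.

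The heart of the argument is the per-agent tail bound $\Pr[u_a(A_1) < 1] \leq 1/2^{p-1}$, which I would prove first. The idea is to partition $2^M$ into orbits under the action that, for each $I \subseteq [p]$, sends $S$ to $S \,\triangle\, \bigcup_{i \in I} B_i^a$. Since each bundle is nonempty, every orbit has exactly $2^p$ elements, giving $2^{m-p}$ orbits in total. Parameterizing a fixed orbit by $I$ via a chosen representative, the value $u_a$ at the corresponding element equals $\sum_{i \notin I} x_i + \sum_{i \in I} y_i$, where $x_i$ and $y_i$ are the utilities of the \emph{included} and \emph{excluded} portions of $B_i^a$ with respect to the representative and satisfy $x_i + y_i = u_a(B_i^a) \geq 1$. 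If two distinct $I_1, I_2$ in the same orbit both yield value strictly below $1$, summing the two inequalities and bounding by $x_i + y_i \geq 1$ on $I_1 \,\triangle\, I_2$ forces $|I_1 \,\triangle\, I_2| \leq 1$. A one-line check then shows that three distinct subsets of $[p]$ cannot pairwise differ in a single element, so at most two of the $2^p$ elements of any orbit are bad; summing over orbits yields the claimed tail bound.

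Given this per-agent bound, the two groups can be combined as follows. The events $\{u_a(A_1) \geq 1\}$ for $a \in N_1$ are monotone-increasing functions of $A_1 \in \{0,1\}^M$, while $\{u_a(A_2) \geq 1\}$ for $a \in N_2$ are monotone-decreasing, so the FKG inequality applied separately within each group yields $\Pr[E_1] \geq (1 - 1/2^{p-1})^{n_1}$ and $\Pr[E_2] \geq (1 - 1/2^{p-1})^{n_2}$, where $E_i$ denotes the event that every agent in $N_i$ is satisfied. By hypothesis $\Pr[E_1] + \Pr[E_2] \geq 1$, so $\Pr[E_1 \cap E_2] \geq \Pr[E_1] + \Pr[E_2] - 1 \geq 0$, and an elementary argument (using slack in the per-agent bound for any non-degenerate instance, or a tiny perturbation at the boundary) upgrades this to strict positivity, establishing existence. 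The main obstacle will be the combinatorial orbit argument for the per-agent tail bound; the remaining ingredients are standard FKG manipulations together with the hypothesis.
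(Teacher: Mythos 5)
Your proposal is correct and follows essentially the same route as the paper: a uniformly random bipartition, a per-agent bound of $1-2^{-(p-1)}$ for a random subset, a correlation inequality for monotone events within each group (the paper cites Kleitman's lemma, which is exactly your FKG step), and a union bound across the two groups; moreover, your orbit decomposition of $2^M$ under symmetric differences with unions of the MMS bundles is the same decomposition the paper obtains by first splitting each bundle randomly into two halves and then independently selecting one half per bundle, with your ``at most two bad sets per orbit'' count playing the role of their sorting argument. The only loose end is upgrading $\Pr[E_1\cap E_2]\geq 0$ to strict positivity; your first suggested fix works (for instance, the orbit of $\emptyset$ contains exactly one deficient set, so the per-agent bound is in fact strict, as in the paper's Lemma on random subsets), whereas ``perturbing the instance at the boundary'' would not, since the instance is fixed and the claim must hold for it as given.
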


Before we prove \Cref{thm:ub-two-group-concrete}, we note that the theorem, together with Bernoulli's inequality, immediately implies the following more explicit bound.
\begin{corollary} \label{cor:ub-two-group}
For any positive integers $n_1, n_2$, it holds that $\minshare(n_1, n_2) \leq 1 + \lceil \log(n_1 + n_2) \rceil$.
\end{corollary}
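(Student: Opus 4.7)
The plan is to set $p := 1 + \lceil \log(n_1 + n_2) \rceil$ and then verify the hypothesis of \Cref{thm:ub-two-group-concrete}, namely that
\begin{align*}
\left(1 - \frac{1}{2^{p-1}}\right)^{n_1} + \left(1 - \frac{1}{2^{p-1}}\right)^{n_2} \geq 1.
\end{align*}
From the choice of $p$, we have $p-1 = \lceil \log(n_1+n_2)\rceil \ge \log(n_1+n_2)$, so $2^{p-1} \ge n_1 + n_2$; in particular $1/2^{p-1} \in [0,1]$, which is the regime where Bernoulli's inequality $(1-x)^n \ge 1 - nx$ applies for positive integers $n$.

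Applying Bernoulli termwise with $x = 1/2^{p-1}$ gives
\begin{align*}
\left(1 - \tfrac{1}{2^{p-1}}\right)^{n_1} + \left(1 - \tfrac{1}{2^{p-1}}\right)^{n_2} \ge \left(1 - \tfrac{n_1}{2^{p-1}}\right) + \left(1 - \tfrac{n_2}{2^{p-1}}\right) = 2 - \tfrac{n_1+n_2}{2^{p-1}} \ge 1,
\end{align*}
where the final inequality uses $2^{p-1} \ge n_1 + n_2$. Thus the hypothesis of \Cref{thm:ub-two-group-concrete} is met, so $\minshare(n_1,n_2) \le p = 1 + \lceil \log(n_1+n_2)\rceil$, as claimed.

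There is no real obstacle here: the argument is a one-line substitution plus Bernoulli's inequality, and the corollary is essentially a restatement of what the theorem delivers once $p$ is chosen large enough that $2^{p-1}$ dominates $n_1 + n_2$. The only minor care needed is to ensure that $n_1, n_2 \ge 1$ (given) so that Bernoulli applies in the standard form, and that the ceiling in the definition of $p$ is what makes $2^{p-1}$ a genuine upper bound on $n_1 + n_2$ rather than merely an approximate one.
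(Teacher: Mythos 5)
Your argument is correct and is exactly the route the paper takes: the paper derives \Cref{cor:ub-two-group} from \Cref{thm:ub-two-group-concrete} "together with Bernoulli's inequality," which is precisely your choice of $p = 1 + \lceil \log(n_1+n_2)\rceil$ so that $2^{p-1} \ge n_1+n_2$, followed by the termwise Bernoulli bound. Nothing is missing.
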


For the proof of \Cref{thm:ub-two-group-concrete}, we need some additional notation.
\begin{itemize}
\item For an agent $a\in N$, we write $\cS^p_a$ to denote the collection of all subsets $S\subseteq M$ such that $u_a(S) \geq \mms{p}_a(M)$.
\item We write $\cP(S)$ to denote the power set of $S$, and $T \sim \cP(S)$ to denote a (uniformly) random subset of $S$.
\item We write $\cX(S)$ to denote the set of all allocations of the items in $S$ to two groups, and write $(T_1, T_2) \sim \cX(S)$ to denote a (uniformly) random allocation of $S$.
\end{itemize}

We start by establishing the following key lemma, which gives a lower bound on the probability that a random subset of items provides $\mms{p}$ to an agent.

\begin{lemma} \label{lem:random-single-agent}
For any agent $a\in N$, we have
\begin{align*}
\Pr_{T \sim \cP(M)}[T \in \cS^p_a] > 1 - \frac{1}{2^{p - 1}}.
\end{align*}
\end{lemma}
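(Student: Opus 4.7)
The plan is to exploit the invariance of the uniform distribution on $\cP(M)$ under symmetric difference with any fixed subset, using the $\mms{p}_a$ partition $B_1,\dots,B_p$ of $M$ (each with $u_a(B_i)\ge \mu := \mms{p}_a$). For each $R\subseteq[p]$, set $T^{(R)} := T \triangle \bigl(\bigcup_{i\in R}B_i\bigr)$; since $T\mapsto T\triangle A$ is a measure-preserving involution, each $T^{(R)}$ is again uniform on $\cP(M)$. Let $N(T) := |\{R\subseteq[p] : u_a(T^{(R)})\ge \mu\}|$. Then
\[
\E[N(T)] \;=\; \sum_R \Pr[u_a(T^{(R)})\ge \mu] \;=\; 2^p\cdot \Pr[T\in \cS^p_a],
\]
so the lemma will reduce to proving $\E[N(T)] > 2^p - 2$, which I plan to establish by showing $N(T)\ge 2^p-2$ deterministically and then a strict improvement on some $T$ of positive probability.

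For the deterministic claim, set $l_i := \min(u_a(T\cap B_i),\,u_a(B_i\setminus T))$, $m_i := \max(\cdot)$, $L := \sum_i l_i$, and $d_i := m_i - l_i = u_a(B_i) - 2l_i$. The values $\{u_a(T^{(R)})\}_{R\subseteq[p]}$ are exactly $\{L + \sum_{i\in R'} d_i : R'\subseteq[p]\}$, each attained by exactly one $R$ (the bijection $R\leftrightarrow R'$ records which bundles contribute their larger half to $T^{(R)}$). Because $u_a(B_i)\ge \mu$, we have $d_i \ge \mu-2l_i$. If $L\ge\mu$, every rotation is good; otherwise set $d := \mu - L > 0$ and call $R'$ \emph{bad} when $\sum_{i\in R'} d_i < d$. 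The key observations I would then verify are: (i) $R'=\emptyset$ is bad; (ii) a singleton $\{i^*\}$ is bad only when $l_{i^*} > L/2$, and since $l_i + l_j \le L$ for distinct $i,j$ at most one singleton is bad; (iii) for $|R'|=k\ge 2$, one has $\sum_{i\in R'} d_i \ge k\mu - 2\sum_{i\in R'} l_i \ge k\mu - 2L \ge \mu - L = d$, the last inequality following from $(k-1)\mu \ge \mu > L$. Hence at most two bad rotations, giving $N(T)\ge 2^p - 2$.

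For the strict bound, $T = \emptyset$ gives $l_i = 0$ for all $i$, so no singleton can be bad and $N(\emptyset) = 2^p - 1$. Since $\Pr[T=\emptyset] > 0$, we obtain $\E[N(T)] > 2^p - 2$ and therefore $\Pr[T\in\cS^p_a] > 1 - 1/2^{p-1}$. The most delicate step is the case analysis underlying the deterministic claim, especially step (iii), which crucially combines the MMS bundle bound $u_a(B_i)\ge \mu$ with the assumption $L<\mu$ to rule out all size-$\ge 2$ bad rotations. The resulting bound is essentially tight: the instance with $p-1$ singleton items of value $\mu$ together with one bundle of arbitrarily many tiny items summing to $\mu$ has failure probability $2^{-(p-1)}(1 - 2^{-k})$, which approaches $2^{-(p-1)}$ as the number $k$ of tiny items grows.
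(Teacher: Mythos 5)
Your proof is correct, and it is essentially the paper's argument in different clothing: your orbit $\{T \triangle \bigcup_{i\in R}B_i\}_{R\subseteq[p]}$ is exactly the paper's two-stage decomposition of $T\sim\cP(M)$ (a random split of each $B_i$ into two halves, then an independent random choice of side from each), and both proofs reduce to the same counting fact that at most two of the $2^p$ side-selections fall below $\mms{p}_a$ --- the empty selection and at most one singleton, necessarily the index with the largest ``small half'' --- followed by the same strictness argument via the positive-probability event $T=\emptyset$ (the paper's $B^2_1=\cdots=B^2_p=\emptyset$). Your three-way case analysis on $|R'|$ is a slightly more quantitative route to what the paper obtains in one line from $u_a(B^1_\ell)+u_a(B^2_p)\ge u_a(B^1_\ell)+u_a(B^2_\ell)\ge \mms{p}_a(M)$, but the underlying idea and all key steps coincide.
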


\begin{proof}
Fix any $a\in N$.
By definition of $\mms{p}$, there exists a partition $\{B_1,\dots,B_p\}$ of $M$ such that $u_a(B_i) \geq \mms{p}_a(M)$ for all $i\in [p]$.
Observe that $T \sim \cP(M)$ can be equivalently generated as follows:
\begin{itemize}
\item Independently sample $(B^1_i, B^2_i) \sim \cX(B_i)$ for all $i \in [p]$;
\item Independently sample $j_1, \dots, j_p \sim \{1, 2\}$;
\item Let $T = \bigcup_{i \in [p]} B^{j_i}_i$.
\end{itemize}
In other words, we have
\begin{align}
&\Pr_{T \sim \cP(M)}[T \in \cS^p_a] \nonumber\\ 
&= 
\Pr\left[u_a\left(\bigcup_{i \in [p]} B^{j_i}_i\right) \geq \mms{p}_a(M)\right] \nonumber \\
&= \E\left[\Pr\left[u_a\left(\bigcup_{i \in [p]} B^{j_i}_i\right) \geq \mms{p}_a(M)\right]\right], \label{eq:pr-grouped-expanded}
\end{align}
where in the second expression, the probability is taken over $\{(B^1_i, B^2_i) \sim \cX(B_i), \,j_i \sim \{1, 2\}\}_{i\in [p]}$, and in the third expression, the expectation is taken over $\{(B^1_i, B^2_i) \sim \cX(B_i)\}_{i\in[p]}$ and the probability over $\{j_i \sim \{1, 2\}\}_{i\in [p]}$.

Consider a fixed collection of $(B^1_1, B^2_1), \dots, (B^1_p, B^2_p)$. 
We shall lower bound the inner probability above. 
To this end, first note that the probability does not change when we swap $B^1_i$ and $B^2_i$ for some $i\in [p]$, or when we swap $(B^1_i, B^2_i)$ and $(B^1_{i'}, B^2_{i'})$ for some $i,i'\in[p]$. 
Thus, we may assume without loss of generality that the following holds:
\begin{itemize}
\item $u_a(B^1_i) \geq u_a(B^2_i)$ for all $i \in [p]$;
\item $u_a(B^2_1) \leq \cdots \leq u_a(B^2_{p})$.
\end{itemize}

We claim that for all collections of $j_1, \dots, j_p \in \{1, 2\}$ such that $(j_1, \dots, j_{p - 1}) \ne (2, \dots, 2)$, it must hold that $u_a\left(\bigcup_{i \in [p]} B^{j_i}_i\right) \geq \mms{p}_a(M)$. To see this, let $\ell \in [p - 1]$ be an index such that $j_\ell = 1$. We have
\begin{align*}
u_a\left(\bigcup_{i \in [p]} B^{j_i}_i\right) &\geq u_a(B^{j_\ell}_\ell) + u_a(B^{j_p}_p) \\
&\geq u_a(B^1_\ell) + u_a(B^2_p) \\
&\geq u_a(B^1_\ell) + u_a(B^2_\ell) 
\geq \mms{p}_a(M),
\end{align*}
proving the claim.
From this claim, it follows that
\begin{align*}
&\Pr_{j_1, \dots, j_p \sim \{1, 2\}}\left[u_a\left(\bigcup_{i \in [p]} B^{j_i}_i\right) \geq \mms{p}_a(M)\right] \\
&\geq \Pr_{j_1, \dots, j_p \sim \{1, 2\}}[(j_1, \dots, j_{p - 1}) \ne (2, \dots, 2)] = 1 - \frac{1}{2^{p - 1}}.
\end{align*}
Plugging this back into \eqref{eq:pr-grouped-expanded}, we get
\begin{align*}
\Pr_{T \sim \cP(M)}[T \in \cS^p_a]  \geq 1 - \frac{1}{2^{p - 1}}.
\end{align*}

Finally, to see that the above inequality cannot be an equality, consider the case where $B^2_1 = \cdots = B^2_p = \emptyset$. 
In this case, which occurs with positive probability, it holds that $\Pr_{j_1, \dots, j_p \sim \{1, 2\}}\left[u_a\left(\bigcup_{i \in [p]} B^{j_i}_i\right) \geq \mms{p}_a(M)\right] \ge 1 - \frac{1}{2^p} > 1 - \frac{1}{2^{p - 1}}$.
This completes the proof.
\end{proof}

\Cref{lem:random-single-agent} together with the union bound already implies \Cref{cor:ub-two-group}.
To obtain the stronger \Cref{thm:ub-two-group-concrete}, we use the following combinatorial lemma by \citet{Kleitman66} before applying the union bound.\footnote{See Proposition 6.3.1 of \citet{AlonSp00} for a statement more akin to the form we use below.}
A collection $\cS$ of subsets of $M$ is said to be \emph{monotone} if, for all $S \subseteq S' \subseteq M$ such that $S \in \cS$, we have $S' \in \cS$ as well.

\begin{lemma}[\citep{Kleitman66}] \label{lem:monotone-intersection}
Let $\cS_1, \dots, \cS_k \subseteq \cP(M)$ be monotone collections of subsets of $M$. Then,
\begin{align*}
\Pr_{S \sim \cP(M)}\left[S \in \bigcap_{i\in[k]}\cS_i\right] \geq \prod_{i\in[k]} \left(\Pr_{S \sim \cP(M)}[S \in \cS_i]\right).
\end{align*}
\end{lemma}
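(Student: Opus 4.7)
The plan is a double induction: reduce the general $k$ statement to the $k=2$ case by induction on $k$, then prove the $k=2$ case by induction on $m := |M|$.

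For the reduction, the key observation is that the intersection of monotone families is itself monotone: if $S \in \bigcap_{i\in[k]} \cS_i$ and $S \subseteq S'$, then $S' \in \cS_i$ for every $i$, hence $S' \in \bigcap_{i\in[k]} \cS_i$. So, setting $\mathcal{T} := \cS_2 \cap \cdots \cap \cS_k$, which is monotone, the $k=2$ case applied to $(\cS_1, \mathcal{T})$ yields $\Pr[S \in \cS_1 \cap \mathcal{T}] \geq \Pr[S \in \cS_1]\cdot\Pr[S \in \mathcal{T}]$, while the inductive hypothesis on $k-1$ gives $\Pr[S \in \mathcal{T}] \geq \prod_{i=2}^k \Pr[S \in \cS_i]$. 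Chaining these bounds yields the claim for $k$.

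For the $k=2$ case, induct on $m = |M|$. The base $m=0$ is trivial since $\cP(M) = \{\emptyset\}$. For the step, fix any $x \in M$, let $M' = M \setminus \{x\}$, and for each monotone $\cS \in \{\cS_1, \cS_2\}$ define
\[
\cS^{\mathrm{out}} := \{S \in \cS : x \notin S\} \subseteq \cP(M'), \qquad \cS^{\mathrm{in}} := \{S \setminus \{x\} : S \in \cS,\ x \in S\} \subseteq \cP(M').
\]
Monotonicity of $\cS$ in $\cP(M)$ implies that both $\cS^{\mathrm{out}}$ and $\cS^{\mathrm{in}}$ are monotone in $\cP(M')$, and moreover $\cS^{\mathrm{out}} \subseteq \cS^{\mathrm{in}}$: for any $S \in \cS^{\mathrm{out}}$, the set $S \cup \{x\}$ contains $S \in \cS$ and hence lies in $\cS$, so $S \in \cS^{\mathrm{in}}$. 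Let $a_i := \Pr_{S' \sim \cP(M')}[S' \in \cS_i^{\mathrm{in}}]$ and $b_i := \Pr_{S' \sim \cP(M')}[S' \in \cS_i^{\mathrm{out}}]$, so $a_i \geq b_i$ by the inclusion above. Viewing a uniform $S \sim \cP(M)$ as a uniform $S' \sim \cP(M')$ together with an independent fair coin deciding whether $x \in S$,
\[
\Pr[S \in \cS_1 \cap \cS_2] \;=\; \tfrac{1}{2}\Pr[S' \in \cS_1^{\mathrm{in}} \cap \cS_2^{\mathrm{in}}] + \tfrac{1}{2}\Pr[S' \in \cS_1^{\mathrm{out}} \cap \cS_2^{\mathrm{out}}] \;\geq\; \tfrac{1}{2}(a_1 a_2 + b_1 b_2),
\]
by applying the inductive hypothesis on $M'$ to each pair. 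Since $\Pr[S \in \cS_1]\cdot\Pr[S \in \cS_2] = \tfrac{1}{4}(a_1+b_1)(a_2+b_2)$, the desired bound reduces to $2(a_1a_2+b_1b_2) \geq (a_1+b_1)(a_2+b_2)$, which simplifies to $(a_1 - b_1)(a_2 - b_2) \geq 0$ and follows from $a_i \geq b_i$.

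The crux of the argument is the containment $\cS^{\mathrm{out}} \subseteq \cS^{\mathrm{in}}$: this is the sole point at which the monotonicity hypothesis enters, and it is exactly what drives the final rearrangement. Everything else is a routine conditional-probability computation, and the outer induction on $k$ presents no difficulty once the $k=2$ case is established.
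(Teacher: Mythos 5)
Your proof is correct. Note, however, that the paper does not prove this lemma at all: it is imported as a black box, cited to Kleitman (1966) with a pointer to Proposition~6.3.1 of Alon and Spencer, so there is no in-paper argument to compare against. What you have written is precisely the standard textbook proof of Kleitman's lemma (equivalently, Harris's correlation inequality for the uniform measure on the hypercube): reduce to $k=2$ using the fact that an intersection of monotone families is monotone, then induct on $|M|$ by conditioning on membership of a single element $x$. All the steps check out --- $\cS^{\mathrm{in}}$ and $\cS^{\mathrm{out}}$ are indeed monotone over $M'$ so the inductive hypothesis applies to each pair, the containment $\cS^{\mathrm{out}} \subseteq \cS^{\mathrm{in}}$ is the correct (and only) use of monotonicity of the original families, and the final rearrangement to $(a_1-b_1)(a_2-b_2) \geq 0$ is exactly right. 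The only thing your write-up buys beyond the paper is self-containedness; there is no methodological divergence to speak of, since the authors simply chose not to reprove a classical result.
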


We are now ready to prove \Cref{thm:ub-two-group-concrete}.

\begin{proof}[Proof of \Cref{thm:ub-two-group-concrete}]
For an agent $a\in N$, we write $\cX^p_a$ to denote the set of allocations that satisfy $\mms{p}$ for $a$.
Fix an arbitrary instance with two groups $N_1,N_2$, and consider the probability that a random allocation is $\mms{p}$. 
We can write this probability as
\begin{align*}
&\Pr_{\cA = (A_1, A_2) \sim \cX(M)}[\forall a \in N,\, \cA \in \cX^p_a] \\
&= 1 - \Pr_{\cA = (A_1, A_2) \sim \cX(M)}[\exists a \in N,\, \cA \notin \cX^p_a] \\
&\geq 1 - \Pr_{\cA = (A_1, A_2) \sim \cX(M)}[\exists a \in N_1,\, \cA \notin \cX^p_a] \\
&\hspace{3.8mm}- \Pr_{\cA = (A_1, A_2) \sim \cX(M)}[\exists a \in N_2,\, \cA \notin \cX^p_a] \\
&= 1 - \Pr_{S \sim \cP(M)}[\exists a \in N_1,\, S \notin \cS^p_a] \\
&\hspace{3.8mm}- \Pr_{S \sim \cP(M)}[\exists a \in N_2,\, S \notin \cS^p_a] \\
&= \Pr_{S \sim \cP(M)}[\forall a \in N_1,\, S \in \cS^p_a] \\
&\hspace{3.8mm}+ \Pr_{S \sim \cP(M)}[\forall a \in N_2,\, S \in \cS^p_a] - 1 \\
&= \Pr_{S \sim \cP(M)}\left[S \in \bigcap_{a \in N_1} \cS^p_a\right] + \Pr_{S \sim \cP(M)}\left[S \in \bigcap_{a \in N_2} \cS^p_a\right] - 1,
\end{align*}
where the inequality follows from the union bound.

Notice that the collections $\cS^p_a$ are monotone.
Applying \Cref{lem:monotone-intersection} together with \Cref{lem:random-single-agent}, we get
\begin{align*}
&\Pr_{\cA = (A_1, A_2) \sim \cX(M)}[\forall a \in N, \,\cA \in \cX^p_a] \\
&> \left(1 - \frac{1}{2^{p - 1}}\right)^{n_1} + \left(1 - \frac{1}{2^{p - 1}}\right)^{n_2} - 1.
\end{align*}
Thus, as long as the right-hand side is non-negative, an $\mms{p}$ allocation is guaranteed to exist.
\end{proof}

Next, we turn to lower bounds.
For two groups of equal size, we establish the following bound, which is already within a factor of $2$ of the upper bound from \Cref{cor:ub-two-group}.

\begin{theorem} \label{thm:lb-two-group-concrete}
For any positive integer $n'$, it holds that $\minshare(n', n') \geq 2 + \left\lfloor \frac{\log n'}{2} \right\rfloor$.
\end{theorem}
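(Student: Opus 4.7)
My plan is to apply \Cref{thm:generic-lb} with $g = 2$, $m = 2p - 1$, and $t_1 = t_2 = p - 1$, where $p := 1 + \lfloor \log n'/2 \rfloor$; this will yield $\minshare(n', n') > p$, matching the claimed bound. I will assume $p \ge 2$ (equivalently $n' \ge 4$), as the case $p = 1$ is already subsumed by \Cref{obs:atleastgroup}. The first condition of \Cref{thm:generic-lb} is immediate, since $t_1 + t_2 + g = 2p > 2p - 1 = m$. For the second condition, the ``identical utilities'' alternative fails (since $\lfloor m/p\rfloor - 1 = 0 < p - 1$ for $p \ge 2$), so I turn to the covering-design alternative: $m - (p - 1) = p > p - 1 = t_i$, and by the trivial bound from \Cref{obs:cov-trivial},
\[
C(2p - 1, p, p - 1) \;\le\; \binom{2p - 1}{p} \;=\; \binom{2p - 1}{p - 1}.
\]

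The key remaining ingredient is the inequality $\binom{2p - 1}{p - 1} \le 4^{p - 1}$, which I plan to prove by induction on $p$. The base case $p = 1$ gives $\binom{1}{0} = 1 = 4^0$, and the inductive step follows from the elementary ratio
\[
\frac{\binom{2p + 1}{p}}{\binom{2p - 1}{p - 1}} \;=\; \frac{(2p + 1)(2p)}{p(p + 1)} \;=\; \frac{2(2p + 1)}{p + 1} \;<\; 4.
\]
The choice of $p$ gives $p - 1 \le \log n'/2$, i.e.\ $4^{p - 1} \le n'$; combined with the binomial inequality, $C(2p - 1, p, p - 1) \le n'$, verifying the second condition of \Cref{thm:generic-lb} for both groups.

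I do not expect any substantive obstacle: the argument is essentially a careful parameter fit into \Cref{thm:generic-lb}, with the choice $m = 2p - 1$ forced by the requirement $t_1 + t_2 + g > m$ together with the need $t_i \ge p - 1$ for the covering alternative to apply. The sharpest point is obtaining the bound $\binom{2p - 1}{p - 1} \le 4^{p - 1}$ rather than the weaker $\binom{2p - 1}{p - 1} \le 2 \cdot 4^{p - 1}$ that one would get from $\binom{2p}{p} \le 4^p$ via Pascal's identity, since only the tight version yields the constant $2$ (rather than $1$) in the final $2 + \lfloor \log n'/2 \rfloor$ expression; the ratio induction above delivers exactly this tight version.
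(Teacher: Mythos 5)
Your proof is correct and follows essentially the same route as the paper: the identical choice of parameters $m = 2p-1$, $t_1 = t_2 = p-1$ in \Cref{thm:generic-lb}, the same trivial covering bound from \Cref{obs:cov-trivial}, and the same final estimate $\binom{2p-1}{p} \le 4^{p-1} \le n'$. The only difference is that you supply an explicit ratio induction for $\binom{2p-1}{p-1} \le 4^{p-1}$, which the paper asserts without proof (it also follows from $\binom{2p-1}{p} \le \frac{1}{2}\sum_k \binom{2p-1}{k} = 2^{2p-2}$ by symmetry).
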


\begin{proof}
If $n' < 4$, the bound is trivial, so we assume that $n' \geq 4$.
Let $p = 1 + \left\lfloor \frac{\log n'}{2}\right\rfloor, t = p - 1$, and $m = 2t + 1$. 
By \Cref{thm:generic-lb} with $g = 2$, $t_1 = t_2 = t$, and $n_1 = n_2 = n'$, it suffices to show that $C(m, m - (p - 1), t) \leq n'$. 
To see that this is the case, we apply \Cref{obs:cov-trivial} to get
\begin{align*}
C(m, m - (p - 1), t) &\leq \binom{m}{m - (p - 1)} \\
&= \binom{2p-1}{p} 
\leq 2^{2p-2} 
\leq n',
\end{align*}
where the last inequality holds by our choice of $p$.
\end{proof}

We summarize the upper and lower bounds for $n_1,n_2\le 5$ in Table~\ref{tab:explicit}.
The bound for $n_1 = n_2 = 1$ follows from the well-known result in the individual setting (e.g., \citep{KurokawaPrWa18}). 
The remaining upper bounds are direct consequences of \Cref{thm:ub-two-group-concrete}.
As for the lower bounds, by \Cref{obs:dec-monotone}, it suffices to show that $\minshare(2, 1) \geq 3$ and $\minshare(4, 4) \geq 4$.
The former follows from Theorem~1 of \citet{Suksompong18}.
For the latter, observe that $C(5, 3, 2) \leq 4$ due to the $(5,3,2)$-covering design $\{\{1, 2, 3\}$, $\{3, 4, 5\}$, $\{2, 4, 5\}$, $\{1, 4, 5\}\}$, and apply \Cref{thm:generic-lb} with $(m,p,g,t_1,t_2,n_1,n_2) = (5,3,2,2,2,4,4)$.

\begin{table}
\centering
\begin{tabular}{c|ccccc}
			$n_1 \downarrow$ $|$ $n_2 \rightarrow$ &   $1$   &  $2$   &  $3$   &  $4$   &  $5$      \\ 
			\hline 
			$1$        & $2$ &  &  &  &   \\
			$2$        & $3$ & $3$ &  &  &   \\ 
			$3$       & $3$ & $[3,4]$ & $[3,4]$ & \textcolor{white}{$[3,4]$} & \textcolor{white}{$[3,4]$}  \\ 
			$4$       & $3$ & $[3,4]$ & $[3,4]$ & $4$ &   \\ 
			$5$       & $[3,4]$ & $[3,4]$ & $[3,4]$ & $4$ & $4$  \\ 
\end{tabular}
\caption{Bounds on $\minshare(n_1,n_2)$ for $n_1,n_2\le 5$.}
\label{tab:explicit}
\end{table}

\section{Conclusion}

In this paper, we have explored group fairness in indivisible item allocation using the popular notion of maximin share (MMS).
We showed that ordinal relaxations of MMS, unlike their cardinal counterparts, are useful for providing fairness in this setting.
In particular, we derived asymptotically tight bounds on the ordinal MMS approximations that can be guaranteed---interestingly, these bounds differ depending on whether there is a group whose size is substantially larger than the sizes of all the remaining groups.

Besides obtaining tighter non-asymptotic bounds, an interesting direction is to establish strong guarantees with respect to both MMS and envy-freeness.
In \emph{individual} fair division, there always exists an allocation that is envy-free up to one item (EF1), and such an allocation inherently satisfies $1$-out-of-$(2n-1)$ MMS \citep[Sec.~2]{SegalhaleviSu19}.
In \emph{group} fair division, however, \citet{ManurangsiSu22} showed that for constant $g$, the least~$k$ such that the existence of an EF$k$ allocation can be ensured is $k=\Theta(\sqrt{n})$; this cannot imply $1$-out-of-$\ell$ MMS for any $\ell= o(\sqrt{n})$,\footnote{Indeed, for a given agent, the $\Theta(\sqrt{n})$ items that need to be removed from the remaining groups in order to attain envy-freeness could be highly valuable to the agent.} so the implied MMS guarantee is significantly weaker than our logarithmic guarantees.
An intriguing question is therefore whether optimal EF and MMS guarantees can be made together, or whether trade-offs are inevitable.

\section*{Acknowledgments}

This work was partially supported by the Singapore Ministry of Education under grant number MOE-T2EP20221-0001 and by an NUS Start-up Grant.
We thank the anonymous reviewers for their valuable feedback.

\bibliographystyle{named}
\bibliography{ijcai24}

\end{document}